\newcommand{\lenz}{{\bf L}}
\newcommand{\erre}{{\bf r}}
\newcommand{\erredot}{\dot{\bf r}}
\newcommand{\qu}{{\bf q}}
\newcommand{\qudot}{\dot{\bf q}}
\newcommand{\angmom}{{\bf c}}
\newcommand{\energy}{\mathcal{E}}
\newcommand{\bDelta}{\bm{\Delta}}
\newcommand{\bK}{{\bf K}}
\newcommand{\bxi}{\bm{\xi}}
\newtheorem{lemma}{\bf Lemma}
\newtheorem{proposition}{\bf Proposition}
\newtheorem{remark}{\bf Remark}
\newtheorem{theorem}{\bf Theorem}
\def\R{\mathbb{R}}
\def\C{\mathbb{C}}
\def\DD{{\bf D}}
\def\EE{{\bf E}}
\def\FF{{\bf F}}
\def\GG{{\bf G}}
\def\JJ{{\bf J}}
\def\cc{{\bf c}}
\def\rhodot{\dot{\rho}}
\def\q{{\bf q}}  
\def\dq{\dot{\bf q}}
\def\bv{{\bf v}}
\def\Att{{\cal A}}
\def\alphadot{\dot{\alpha}} 
\def\deltadot{\dot{\delta}}
\def\bzero{{\bf 0}}   
\def\Rvec{{\bf R}}
\def\Avec{{\bf A}}
\def\bPhi{\bm{\Phi}}
\def\bPsi{\bm{\Psi}}
\def\erho{{\bf e}^\rho}
\def\ealpha{{\bf e}^\alpha}
\def\edelta{{\bf e}^\delta}
\def\eort{{\bf e}^\perp}
\def\qrho{\q\cdot\erho}
\def\qort{\q\cdot\eort}
\def\dqrho{\dq\cdot\erho}
\begin{document} 
\title{\bf Orbit Determination with the two-body Integrals. III}

\author{{\bf G.~F. Gronchi}\footnote{{\tt gronchi@dm.unipi.it}}, 
        {\bf  G. Ba\`u}\footnote{{\tt bagiugio@gmail.com}},
        {\bf S. Mar\`o}\footnote{{\tt maro@mail.dm.unipi.it}} \\
     {Dipartimento di Matematica, Universit\`a di Pisa,}\\ 
     {Largo B. Pontecorvo, 5, Pisa, Italy}}

 
\maketitle

\begin{abstract}
  We present the results of our investigation on the use of the
  two-body integrals to compute preliminary orbits by linking too
  short arcs of observations of celestial bodies.  This work
  introduces a significant improvement with respect to the previous
  papers on the same subject \cite{gdm10}, \cite{gfd11}.  Here we find
  a univariate polynomial equation of degree 9 in the radial distance
  $\rho$ of the orbit at the mean epoch of one of the two arcs.  This
  is obtained by a combination of the algebraic integrals of the
  two-body problem. Moreover, the elimination step, which in
  \cite{gdm10}, \cite{gfd11} was done by resultant theory coupled with
  the discrete Fourier transform, is here obtained by elementary
  calculations.  We also show some numerical tests to illustrate the
  performance of the new algorithm.
\end{abstract}


\section{Introduction}

This paper is related to the research started in \cite{gdm10},
\cite{gfd11}, where some first integrals of the two-body problem were
used to write polynomial equations for the linkage problem with
optical observations of asteroids and space debris.

\noindent In \cite{gdm10} the authors consider polynomial equations
with total degree 48, that are consequences of the conservation of
angular momentum and energy, and they propose two methods to search
for all the solutions using algebraic elimination theory.
%
The same equations were first introduced in \cite{th77}, but their algebraic
character was not fully exploited at that time.
In \cite{gfd11} the authors introduce for the same purpose 
new polynomial equations, with total degree 20, using the angular momentum
integral and a projection of Laplace-Lenz vector along a suitable
direction.
Both in \cite{gdm10} and in \cite{gfd11} the number of the considered
equations is equal to the number of unknowns.

\noindent Here we improve significantly the previous results by writing an
overdetermined polynomial system (more equations than unknowns) which is
proved to be generically consistent, i.e. the set of solutions in the complex
field is not empty. For generic values of the data, by variable elimination,
we obtain a system of two univariate polynomials of degree 10 with a greatest
common divisor of degree 9.

\noindent We also discuss the case where the oblateness of the Earth
is relevant, so that we add the perturbation of the $J_2$
term to the Keplerian potential. In \cite{ftmr10} and \cite{gfd11}
this problem was faced by an iterative scheme, writing at each step
polynomial equations with the same algebraic structure as in the
unperturbed case (without $J_2$ effect).
In this paper, the overdetermined system that can be written following the
steps of the unperturbed case has the same algebraic structure but is
generically inconsistent.  However, we can use the iterative scheme
mentioned above by neglecting one polynomial of the system.

This paper is organized as follows.  After recalling the linkage
problem and some preliminaries on the two-body integrals
(Sections~\ref{s:linkage}, \ref{s:integrals}), in
Section~\ref{s:polylink} we discuss the polynomial equations that can
be written for this problem, including the overdetermined
system (\ref{sys2solve}) which is the object of this work.
In Section~\ref{s:elim} we show how the elimination steps can be
carried out, and we prove the consistency of system (\ref{sys2solve}).
In Section~\ref{s:selsol} we discuss the spurious solutions of
(\ref{sys2solve}), and illustrate two methods to discard them.
Section~\ref{s:J2} is devoted to the linkage problem with the $J_2$
effect.
Some numerical tests are presented in Section~\ref{s:numexp}.
In the Appendix we discuss a simple way to filter the pairs of
attributables to be linked with this algorithm.

\section{Linkage of too short arcs}
\label{s:linkage}

We consider objects moving in a central force field. 
Let us fix an inertial reference frame, with the origin at the center of
attraction $O$, which is the center of the Sun (Earth) in the asteroid
(space debris) case.
Assume the position $\q$ and velocity $\dq$ of the observer are
known functions of time.  We describe the position of the observed body as the
sum
\[
\erre = \q + \rho\erho,
\]
with $\rho$ the topocentric distance and $\erho$ the {\em line of sight} unit
vector.
We choose spherical coordinates $(\alpha, \delta, \rho)\in [-\pi,\pi)\times
(-\pi/2,\pi/2)\times \R^+$, so that
\[
\erho = (\cos\delta\cos\alpha,\cos\delta\sin\alpha,\sin\delta) .
\]
A typical choice for $\alpha,\delta$ is right ascension and declination.
The velocity vector is
\[
\erredot = \dq + \dot{\rho}\erho +
\rho(\alphadot\cos\delta\ealpha + \deltadot\edelta),
\qquad
\rhodot,\alphadot,\deltadot\in\R, \rho\in\R^+,
\]
where $\rhodot$, $\rho\alphadot\cos\delta$, $\rho\deltadot$ are the components
of the velocity, relative to the observer,
in the (positively oriented)
orthonormal basis
$\{\erho,\ealpha,\edelta\}$, with
\[
\ealpha =
(\cos\delta)^{-1}\frac{\partial \erho}{\partial\alpha},\hskip 1cm \edelta =
\frac{\partial \erho}{\partial\delta} .
\]

Let $(t_i, \alpha_i, \delta_i)$ with $i=1\ldots m$, $m\geq 2$, be a
short arc of optical observations of a moving body, made from the same
station.  
%
If $m\geq 3$, we can compute $\alpha$, $\delta$, $\alphadot$, $\deltadot$,
$\ddot\alpha$, $\ddot\delta$ at the mean time $\bar t =
\frac{1}{m}\sum_{i=1}^mt_i$ by a quadratic fit. From these quantities we can
try to compute a preliminary orbit.  When the second derivatives are not
reliable due to errors in the observations (or not available, if $m=2$) we
speak of a too short arc (TSA) and, to compute a preliminary orbit, we
have to add information coming from other arcs of observations. This is a
typical identification problem, see \cite{mg2010}.


\noindent In any case, it is possible to compute an {\em attributable}
\[
\Att = (\alpha, \delta,
\alphadot, \deltadot) \in [-\pi,\pi) \times
(-\pi/2,\pi/2) \times \R^2,
\]
representing the angular position and velocity of the body at epoch
$\bar t$ (see \cite{ident4}, \cite{gdm10}).  The radial
distance and velocity $\rho, \rhodot$ are completely undetermined and are
the missing quantities to define an orbit for the body.

\noindent In this paper we deal with the {\em linkage problem}, that
is to join together two TSAs of observations to form an orbit fitting
all the data.


\section{First integrals of Kepler's motion}
\label{s:integrals}

We consider the first integrals of the equation of Kepler's problem
\[
\ddot\erre = -\frac{\mu}{|\erre|^3}\erre
\]
as functions of the unknowns $\rho$, $\rhodot$.
The angular momentum is 
the polynomial vector
\[
\cc(\rho,\rhodot) = \erre \times \erredot = \DD \rhodot + \EE \rho^2 + \FF \rho
+ \GG,
\]
with
\[
\DD = \q\times\erho,\quad\
%
\EE = \erho\times\eort,\quad\
%
\FF = \q\times\eort + \erho\times\dq,\quad\
%
\GG = \q\times\dq,
\]
where we have set
\[
\eort = \alphadot\cos\delta \ealpha + \deltadot\edelta.
\]
\noindent The expression of the energy is
\begin{equation}
{\cal E}(\rho,\rhodot) = \frac{1}{2}|\erredot|^2 - \frac{\mu}{|\erre|},
\label{energy}
\end{equation}
where 
\begin{eqnarray}
|\erre| &=& (\rho^2 + |\q|^2 + 2\rho\qrho)^{1/2},\label{erre} \\
\vert\erredot\vert^2 &=& \rhodot^2 
+ \vert\eort\vert^2\rho^2 + 2\dqrho\rhodot + 
2\dq\cdot\eort\rho + \vert\dq\vert^2. \label{erredot2}
\end{eqnarray}
The Laplace-Lenz vector $\lenz$ is given by
\begin{equation}
\mu\lenz(\rho,\rhodot) = \erredot\times\angmom -
\mu\frac{\erre}{|\erre|} = \Bigl(\vert\erredot\vert^2
  -\frac{\mu}{|\erre|}\Bigr)\erre - (\erredot\cdot\erre)\erredot,
\label{laplace_lenz}
\end{equation}
with $|\erre|$, $\vert\erredot\vert^2$ as in (\ref{erre}), (\ref{erredot2}),
and
\begin{eqnarray*}
\erredot\cdot\erre &=& \rho\rhodot + \qrho\rhodot + 
(\dqrho + \qort)\rho +  \dq\cdot\q .
\end{eqnarray*}
Moreover, the following relations hold for all $\rho,\rhodot$:
\begin{equation}
\angmom\cdot\lenz = 0,
\hskip 1cm
\mu^2|\lenz|^2 = \mu^2 + 2\energy|\angmom|^2.
\label{intrel}
\end{equation}
Expressions (\ref{energy}), (\ref{laplace_lenz}) are algebraic, but not
polynomial, in $\rho,\rhodot$. However, we can introduce a new variable
$u\in\R$, together with the relation $|\erre|^2u^2=\mu^2$ and we obtain
\[
\energy = \frac{1}{2}|\erredot|^2 - u,
\hskip 1cm
\mu\lenz = (|\erredot|^2 - u)\erre - (\erredot\cdot\erre)\erredot,
\]
that are polynomials in $\rho, \rhodot, u$.
%

\noindent For later reference we also introduce the quantity
\begin{equation}
  \bK = \frac{1}{2}\vert\erredot\vert^2\erre -
  (\erredot\cdot\erre)\erredot
\label{kappa} .
\end{equation}

\section{Polynomial equations for the linkage}
\label{s:polylink}

We use the notation above, with index 1 or 2 referring to the epoch.
Let 
\[
\Att_j =(\alpha_j,\delta_j,\alphadot_j,\deltadot_j), \qquad j=1,2 
\]
be two attributables at epochs $\bar{t}_j$.
We consider the polynomial system
\begin{equation}
\angmom_1 =\angmom_2, \quad
\lenz_1 = \lenz_2, \quad
\energy_1 = \energy_2, \quad
u_1^2|\erre_1|^2 = \mu^2, \quad
u_2^2|\erre_2|^2 = \mu^2,
\label{fullsys}
\end{equation}
in the 6 unknowns
\[
(\rho_1,\rho_2,\rhodot_1,\rhodot_2,u_1,u_2).
\]
System (\ref{fullsys}) is defined by the vector of parameters
\[
(\Att_1,\Att_2,\qu_1,\qu_2,\qudot_1,\qudot_2),
\]
and we shall discuss properties which hold for generic values of
them.
Moreover, (\ref{fullsys}) is composed by 9 equations with 6 unknowns.
However, due to relations (\ref{intrel}), 2 equations can be
considered as consequences of the others, so that we are left with a
system of 7 equations with 6 unknowns.


Assume the attributables $\Att_1,\Att_2$ refer to the same
observed body, the two-body dynamics is perfectly respected, and there
are no observing errors.  Then the set of solutions of (\ref{fullsys})
is not empty.
Taking into account the
observational errors, and the fact that the two-body motion is only an
approximation, system (\ref{fullsys}) turns out to be generically
inconsistent, i.e. it has no solution in $\C$.

We search for a polynomial system, consequence of (\ref{fullsys}), which is
generically consistent, with a finite number of solutions in $\C$, and which
leads by elimination to a univariate polynomial equation of the lowest degree
possible.

Introducing relations $u_j^2|\erre_j|^2=\mu^2$ ($j=1,2$) for the
auxiliary variables $u_1$, $u_2$ corresponds to the squaring
operations, used in \cite{gdm10}, \cite{gfd11} to bring the selected
algebraic system in the variables
$(\rho_1,\rho_2,\rhodot_1,\rhodot_2)$ into a polynomial form.  Since
these operations are responsible of the high total degree of the
resulting polynomial systems (48 and 20 respectively), in writing the
new equations we try to cancel the dependence on both $u_1$, $u_2$ by
algebraic manipulations of the conservation laws.
First we shall consider the intermediate system
\begin{equation}
\angmom_1=\angmom_2,\quad
\mu(\lenz_1 -\lenz_2) = (\energy_1-\energy_2)\erre_2,\quad
u_1^2|\erre_1|^2 = \mu^2,
\label{intersys}
\end{equation}
where $u_2$ does not appear, which is still inconsistent; 
then we shall take into account the system
\begin{equation}
\angmom_1=\angmom_2,\quad
(\bK_1 - \bK_2)\times(\erre_1-\erre_2) = \bzero,
\label{sys2solve}
\end{equation}
where also $u_1$ does not appear, whose consistency is proven in the next
section.

\section{Elimination of variables}
\label{s:elim}

In this section we show that generically system (\ref{sys2solve}) is
consistent.  In particular, by elimination of variables, we shall end up with
two univariate polynomials of degree 10 in the range $\rho_2$, whose greatest
common divisor generically has degree 9.  A similar procedure can be carried
out by eliminating all the variables but $\rho_1$.

\subsection{Angular momentum equations}
\label{s:angmom}

The conservation of angular momentum gives us 3 polynomial equations
that are linear in $\rhodot_1, \rhodot_2$, and quadratic in
$\rho_1,\rho_2$. Therefore, it is natural to use these equations to
eliminate the radial velocities, as done in \cite{gdm10},
\cite{gfd11}.
These equations can be written as
\begin{equation}
\DD_1\rhodot_1 - \DD_2\rhodot_2 = \JJ(\rho_1,\rho_2),
\label{eq_AM}
\end{equation}
with $\JJ$ a vector whose components are quadratic polynomials in $\rho_1$,
$\rho_2$.
Following \cite{gdm10} we project (\ref{eq_AM}) onto the
vectors $\DD_2\times(\DD_1\times\DD_2)$ and
$\DD_1\times(\DD_1\times\DD_2)$ and obtain $\rhodot_1$, $\rhodot_2$ as
quadratic polynomials in $\rho_1$, $\rho_2$.  With these expressions
of $\rhodot_1$, $\rhodot_2$ we have
\begin{equation}
\bDelta_c\times(\DD_1\times\DD_2)=\bzero,
\label{direction_Deltac}
\end{equation}
with $\bDelta_c= \angmom_1-\angmom_2$, whatever the values of $\rho_1,
\rho_2$.
\noindent The projection of (\ref{eq_AM}) onto $\DD_1\times\DD_2$ allows
us to eliminate the variables $\rhodot_1, \rhodot_2$ and yields
\begin{equation}
q(\rho_1,\rho_2) = q_{2,0}\rho_1^2 + q_{1,0}\rho_1 +
q_{0,2}\rho_2^2 + q_{0,1}\rho_2 + q_{0,0},
\label{quad_form}
\end{equation}
where the coefficients $q_{i,j}$ depend only on the attributables and
on the position and velocity of the observer at epochs $\bar{t}_1$,
$\bar{t}_2$.


\noindent In the following we shall consider the quantities introduced in
Section~\ref{s:integrals} as function of $\rho_1$, $\rho_2$ only, by
the elimination of $\rhodot_1$, $\rhodot_2$ just recalled.

\subsection{Bivariate equations for the linkage}

By subtracting $\energy_2\erre_2$ to both members of the Laplace-Lenz
equation, and using the conservation of energy, we obtain
\begin{equation}
\mu\lenz_1 -\energy_1\erre_2=\mu\lenz_2-\energy_2\erre_2.
\label{u2elim}
\end{equation}
Equation (\ref{u2elim}) can be written
\begin{equation}
\bDelta_K + (\frac{1}{2}|\erredot_1|^2 - u_1)\bDelta_r= \bzero,
\label{mixLE}
\end{equation}
where we have set
\[
\bDelta_K = \bK_1 - \bK_2,
\hskip 1cm
\bDelta_r = \erre_1 - \erre_2,
\]
with $\bK$ as in (\ref{kappa}).
Note that the variable $u_2$ does not appear in (\ref{mixLE}).


\noindent We can also eliminate $u_1$ by cross product with $\bDelta_r$:
\begin{equation}
\bDelta_K\times \bDelta_r = \bzero.
\label{eqdeg6}
\end{equation}
For brevity we set $\bxi = \bDelta_K\times \bDelta_r$,
and we note that
\begin{equation}
\bxi =
\frac{1}{2}(|\erredot_2|^2- |\erredot_1|^2)\erre_1\times\erre_2
- (\erredot_1\cdot\erre_1)\erredot_1\times\bDelta_r +
(\erredot_2\cdot\erre_2)\erredot_2\times\bDelta_r.
\label{DeltaKxDeltar}
\end{equation}

\begin{remark}
By developing the expressions of $\erre_1\times\erre_2$,
$\erredot_1\times\bDelta_r$, $\erredot_2\times\bDelta_r$ as polynomials in 
$\rho_1,\rho_2$ we obtain that the monomials in $\bxi$
with the highest total degree, which is 6, are all multiplied by
$\erho_1\times\erho_2$.
\label{rem:deg6}
\end{remark}

\noindent In the following section we shall prove that for generic values of
the data the system
\begin{equation}
q = 0,
\qquad
\bxi= \bzero
\label{qxi}
\end{equation}
is consistent, that is the set of its roots in $\C$ is not empty.

\noindent Note that, if $q=0$, the vector $\bxi$ is parallel to the common
value $\angmom_1=\angmom_2$ of the angular momentum.

\subsection{Consistency of the equations}

The proof of the consistency relies on some geometrical considerations. In
particular it is relevant to check whether the angular momentum vector is
orthogonal to the line of sight.
We introduce the quantities
\[
c_{ij} = \angmom_i\cdot\erho_j, \hskip 1cm i,j=1,2.
\]
More explicitly we have
\begin{eqnarray*}
c_{11} &=&  \qu_1\times\erho_1\cdot\eort_1 \rho_1 +
\qu_1\times\erho_1\cdot\qudot_1,
\\
c_{12} &=& \erho_1\times\eort_1\cdot\erho_2\rho_1^2 +
\qu_1\times\erho_1\cdot\erho_2\rhodot_1(\rho_1,\rho_2) +\\ 
&+& (\erho_1\times\qudot_1 + \qu_1\times\eort_1)\cdot\erho_2\rho_1 +
\qu_1\times\qudot_1\cdot\erho_2,
%
\end{eqnarray*}
and similar expressions for $c_{22}$, $c_{21}$.  
In particular, equations $c_{11}=0$ and $c_{22}=0$ represent
straight lines in the plane $\rho_1\rho_2$, while $c_{12}=0$ and
$c_{21}=0$ give conic sections, see Figure~\ref{f:orthrel}.

\begin{figure}[!h]
\centerline{\includegraphics[width=10cm]{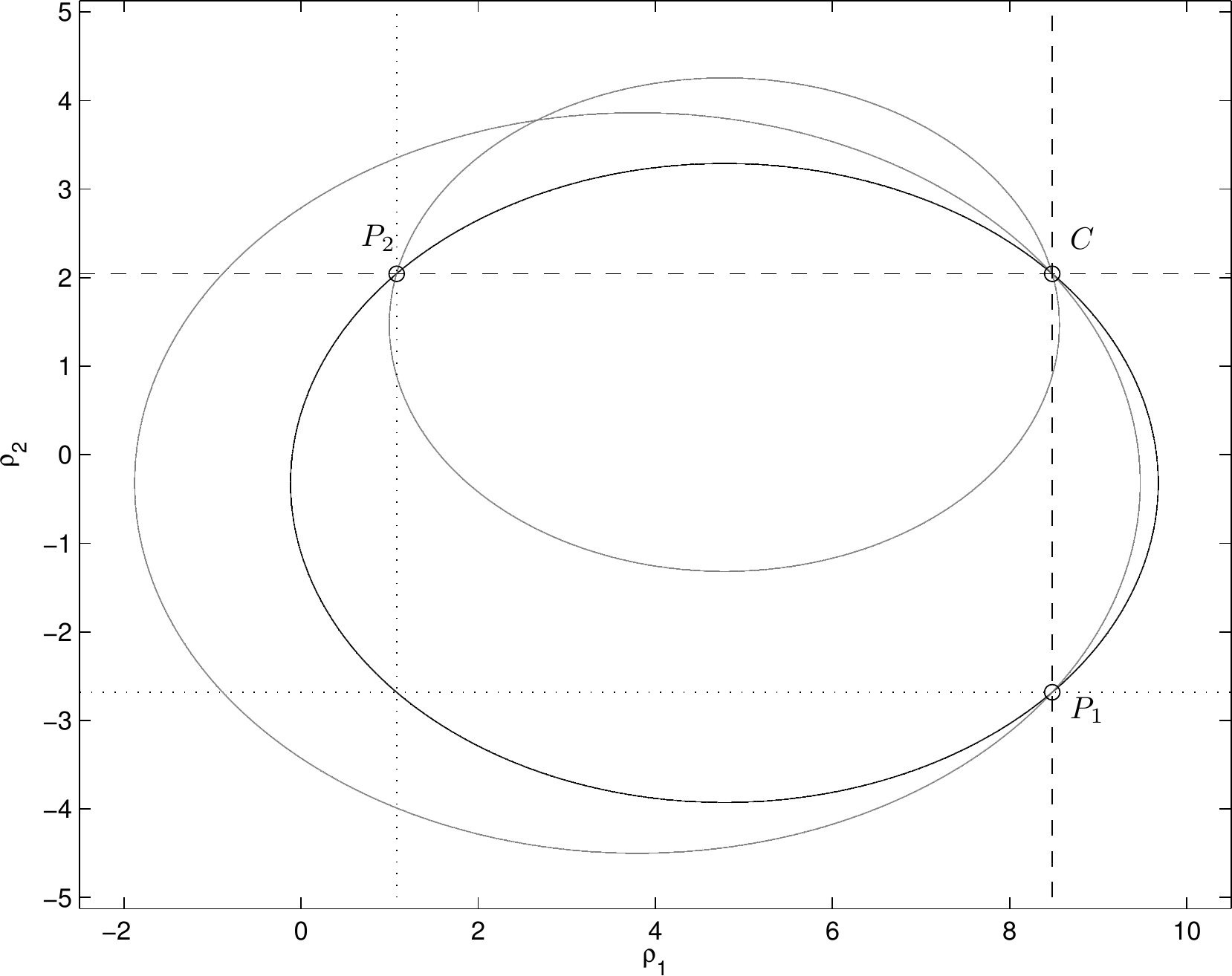}}
\caption{In the plane $\rho_1\rho_2$, for a test case, we draw the
  curves $q=0$ (black), $c_{12}=0$ and $c_{21}=0$ (light gray), and the straight lines $c_{11}=0$, $c_{22}=0$ (dashed),
and $\rho_1=\rho_1', \rho_2=\rho_2'$ (dotted).}
\label{f:orthrel}
\end{figure}

\noindent Consider the point $C=(\rho_1'',\rho_2'')$ defined by $c_{11}=c_{22}
= 0$, so that
\begin{equation}
\rho_1'' =
\frac{\qu_1\times\qudot_1\cdot\erho_1}{\erho_1\times\eort_1\cdot\qu_1},
\qquad \rho_2'' =
\frac{\qu_2\times\qudot_2\cdot\erho_2}{\erho_2\times\eort_2\cdot\qu_2}.
\label{rhojsecondo}
\end{equation}

\noindent In Lemma~\ref{conicnotempty} we shall prove that $C$ lies on the
conic $q=0$ and is the only point where both angular momenta $\angmom_1$,
$\angmom_2$ vanish.  The straight line $\rho_2 = \rho_2''$ generically meets
$q=0$ in another point $P_2=(\rho_1',\rho_2'')$, where the angular momenta do
not vanish.  Similarly, the straight line $\rho_1 = \rho_1''$ generically
meets $q=0$ in another point $P_1=(\rho_1'',\rho_2')$, where the angular
momenta are not zero, see Figure~\ref{f:orthrel}.  For $q=0$, the vector
$\erre_1\times\erre_2$ gives the direction of $\angmom_1=\angmom_2$, therefore
from the equations $\erre_1\times\erre_2\cdot\erho_j=0$, $j=1,2$ we obtain
\begin{equation}
\rho_1' = \frac{\qu_1\times\qu_2\cdot\erho_2}{\erho_1\times\erho_2\cdot\qu_2},
\qquad
\rho_2' = \frac{\qu_1\times\qu_2\cdot\erho_1}{\erho_1\times\erho_2\cdot\qu_1}.
\label{rhojprimo}
\end{equation}

\noindent Note that 
\begin{eqnarray*}
q_{2,0} &=& -\EE_1\cdot\DD_1\times\DD_2 = 
-(\erho_1\times\eort_1\cdot\qu_1)(\erho_1\times\erho_2\cdot\qu_2),
\end{eqnarray*}
so that $q_{2,0}\neq 0$ implies that both $\rho_1'$ and $\rho_1''$ are well
defined.

\noindent In a similar way we obtain
\begin{eqnarray*}
q_{0,2} &=& \EE_2\cdot\DD_1\times\DD_2 = 
(\erho_2\times\eort_2\cdot\qu_2)(\erho_1\times\erho_2\cdot\qu_1),
\end{eqnarray*}
so that $q_{0,2}\neq 0$ implies that both $\rho_2'$ and $\rho_2''$ are well
defined.

\noindent Generically we have
\begin{equation}
q_{2,0}, q_{0,2}\neq 0.
\label{conicnondeg}
\end{equation}

\begin{lemma}
  If (\ref{conicnondeg}) holds, then the point
  $C=(\rho_1'',\rho_2'')$ given by $c_{11}=c_{22} = 0$ satisfies
  $q(\rho_1'',\rho_2'') = 0$.  Moreover, in $C$ we have $\angmom_1 =
  \angmom_2=\bzero$ and $C$ is the unique point in the plane $\rho_1\rho_2$
  where both angular momenta vanish.
\label{conicnotempty}
\end{lemma}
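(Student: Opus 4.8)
The plan is to prove the lemma in two stages: first show that the point $C=(\rho_1'',\rho_2'')$ defined by $c_{11}=c_{22}=0$ actually makes both angular momentum vectors vanish, and then show that $q(\rho_1'',\rho_2'')=0$ follows, with uniqueness coming from a count of how many equations force the vanishing. The key observation for the first stage is that, under the elimination of $\rhodot_1,\rhodot_2$ described in Section~\ref{s:angmom}, equation (\ref{direction_Deltac}) already tells us that $\bDelta_c=\angmom_1-\angmom_2$ is always parallel to $\DD_1\times\DD_2$, whatever $\rho_1,\rho_2$. So to control $\angmom_1$ and $\angmom_2$ separately it is natural to look at their components along the line of sight vectors $\erho_1,\erho_2$, i.e. at the quantities $c_{ij}$.

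First I would record the geometric fact that, for the value $\rho_1=\rho_1''$ given in (\ref{rhojsecondo}), the vector $\angmom_1(\rho_1'',\cdot)$ — which by definition is $\erre_1\times\erredot_1$ with the radial velocity already substituted — is orthogonal to $\erho_1$, since $c_{11}=\angmom_1\cdot\erho_1$ vanishes there by construction; but $\angmom_1$ is also always orthogonal to $\erre_1=\qu_1+\rho_1\erho_1$, hence to $\qu_1$ once it is orthogonal to $\erho_1$. A vector in $\R^3$ orthogonal to the two independent vectors $\erho_1$ and $\qu_1$ lies on a line; I would then use the remaining structure — either the explicit quadratic expression for $c_{12}$ together with the already-known direction relation, or directly the identity $\angmom_1\cdot\erre_1=0$ combined with $\angmom_1\cdot\erho_1=0$ and the fact that at $C$ we also have $c_{22}=0$ forcing the analogous statement for $\angmom_2$, and the constraint $\angmom_1\parallel\angmom_2$ from (\ref{direction_Deltac}) — to conclude that the only consistent possibility is $\angmom_1=\angmom_2=\bzero$. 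Concretely: $\angmom_1$ is orthogonal to $\erho_1$ and $\qu_1$, $\angmom_2$ is orthogonal to $\erho_2$ and $\qu_2$, yet $\angmom_1-\angmom_2$ is a multiple of $\DD_1\times\DD_2=(\qu_1\times\erho_1)\times(\qu_2\times\erho_2)$; for generic data these orthogonality conditions plus the parallelism leave only the zero vector, and I expect the clean way to see it is that $\angmom_1$ orthogonal to both $\qu_1$ and $\erho_1$ means $\angmom_1\parallel\qu_1\times\erho_1=\DD_1$, similarly $\angmom_2\parallel\DD_2$, so $\angmom_1-\angmom_2$ lies in $\mathrm{span}(\DD_1,\DD_2)$; intersecting that plane with the line $\mathrm{span}(\DD_1\times\DD_2)$ gives $\bzero$ (using (\ref{conicnondeg}) to guarantee $\DD_1,\DD_2$ independent), hence $\angmom_1=\angmom_2$, and being simultaneously parallel to $\DD_1$ and to $\DD_2$ forces each to vanish.

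Once $\angmom_1=\angmom_2=\bzero$ at $C$, the equation $q=0$ follows immediately: $q$ is, up to the nonzero scalar factor from the projection, the component $(\angmom_1-\angmom_2)\cdot(\DD_1\times\DD_2)$ after eliminating the radial velocities (this is exactly how (\ref{quad_form}) was derived from (\ref{eq_AM}) by projecting onto $\DD_1\times\DD_2$), so it vanishes wherever $\angmom_1=\angmom_2$, in particular at $C$. For uniqueness, I would argue in reverse: if at some point $(\rho_1,\rho_2)$ both $\angmom_1=\bzero$ and $\angmom_2=\bzero$, then in particular $c_{11}=\angmom_1\cdot\erho_1=0$ and $c_{22}=\angmom_2\cdot\erho_2=0$; but $c_{11}=0$ is (from the explicit linear expression displayed before the lemma, with leading coefficient $\erho_1\times\eort_1\cdot\qu_1\neq0$ under (\ref{conicnondeg})) a nondegenerate equation determining $\rho_1=\rho_1''$ uniquely, and likewise $c_{22}=0$ gives $\rho_2=\rho_2''$ uniquely, so the only candidate is $C$ itself.

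The main obstacle I anticipate is the separation-of-the-two-momenta step in the middle paragraph: showing that the orthogonality of $\angmom_j$ to $\erho_j$ and $\qu_j$, together with $\angmom_1-\angmom_2\parallel\DD_1\times\DD_2$, genuinely forces $\angmom_1=\angmom_2=\bzero$ rather than some nonzero common vector. The delicate point is that $\angmom_j$ is a priori orthogonal to $\erre_j$, not to $\qu_j$, so one must use the vanishing of $c_{jj}$ to upgrade $\erre_j$-orthogonality to $\erho_j$-orthogonality and thence to $\qu_j$-orthogonality; and one must be careful that the genericity hypothesis (\ref{conicnondeg}) is exactly what makes $\DD_1,\DD_2$ linearly independent so that the plane they span meets the line $\mathrm{span}(\DD_1\times\DD_2)$ only at the origin. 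I would keep the argument purely linear-algebraic, avoiding any expansion of the quadratic coefficients beyond the two identities for $q_{2,0},q_{0,2}$ already given.
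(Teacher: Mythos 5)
Your proposal is correct and follows essentially the same route as the paper: the paper also shows that $c_{jj}=0$ forces $\angmom_j\parallel\DD_j$ (via the identity $\angmom_j\times\DD_j=-(\erredot_j\cdot\DD_j)\erre_j$ rather than your orthogonality-to-$\erho_j$-and-$\qu_1$ argument, but these are the same fact), then combines this with (\ref{direction_Deltac}) and the linear independence of $\DD_1,\DD_2$ guaranteed by (\ref{conicnondeg}) to get $\bDelta_c\cdot\DD_1\times\DD_2=0$ (which is $q=0$), hence $\angmom_1=\angmom_2=\bzero$, with uniqueness from the linearity of $c_{11},c_{22}$. The only difference is the order in which $q=0$ and the vanishing of the momenta are deduced, which is immaterial.
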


\begin{proof}
Using relation $\erre_j\cdot\DD_j=0$ we obtain
\[
\angmom_j\times\DD_j = -(\erredot_j\cdot\DD_j)\erre_j,\hskip 1cm j=1,2.
\]
Moreover, condition (\ref{conicnondeg}) yields $\erho_j\times\q_j\neq\bzero$,
so that $\erre_j\neq\bzero$.

\noindent From relations
\[
\erredot_j\cdot\DD_j = -\angmom_j\cdot\erho_j 
\]
we have
\begin{equation}
\angmom_j\times\DD_j= \bzero \ \mbox{ if and only if }\  c_{jj}=0
\label{angmomparallD}
\end{equation}
for $j=1,2$.
Therefore $c_{11} = c_{22} = 0$
implies
\[
\bDelta_c\cdot\DD_1\times\DD_2 = 0,
\]
that together with (\ref{direction_Deltac}) gives
\begin{equation}
\angmom_1=\angmom_2.
\label{equalc}
\end{equation}
Finally, relations (\ref{angmomparallD}), (\ref{equalc})  imply
$\angmom_j=\bzero$, $j=1,2$. The uniqueness immediately follows from the
definition of $C$.

\rightline{$\square$\hskip 0.5cm}

\end{proof}

\begin{lemma}
In the point $C=(\rho_1'',\rho_2'')$ generically we have $\bxi\neq\bzero$.
\label{czeroxinonzero}
\end{lemma}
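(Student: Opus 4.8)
The plan is to evaluate $\bxi$ explicitly at the point $C=(\rho_1'',\rho_2'')$ and show that it does not vanish identically as a function of the data. By Lemma~\ref{conicnotempty}, at $C$ we have $\angmom_1=\angmom_2=\bzero$, so in particular $\erredot_j\times\erre_j=\bzero$ at $C$, meaning $\erredot_j$ is parallel to $\erre_j$ there for $j=1,2$. Writing $\erredot_j = \lambda_j\erre_j$ for scalars $\lambda_j$ (with $\erre_j\neq\bzero$ as noted in the previous proof), I would substitute into the expression (\ref{DeltaKxDeltar}) for $\bxi$. The terms $\erredot_j\cdot\erre_j = \lambda_j|\erre_j|^2$ and $\erredot_j\times\bDelta_r = \lambda_j\erre_j\times(\erre_1-\erre_2) = -\lambda_j\,\erre_1\times\erre_2$ (for $j=1$) and $= \lambda_2\,\erre_1\times\erre_2$ (for $j=2$, up to sign bookkeeping), so every term in $\bxi$ becomes a multiple of $\erre_1\times\erre_2$. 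Collecting coefficients, I expect to obtain at $C$
\[
\bxi = \Big[\tfrac{1}{2}(|\erredot_2|^2-|\erredot_1|^2) + \lambda_1^2|\erre_1|^2 - \lambda_2^2|\erre_2|^2\Big]\,\erre_1\times\erre_2,
\]
which, using $|\erredot_j|^2 = \lambda_j^2|\erre_j|^2$, simplifies further to a scalar times $\erre_1\times\erre_2$ — and the scalar itself simplifies, since the $|\erredot_j|^2$ contribution cancels half of the $\lambda_j^2|\erre_j|^2$ terms. So at $C$, $\bxi$ is a concrete rational expression in the data times $\erre_1(C)\times\erre_2(C)$.

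Then I would argue that this product is generically nonzero. There are two factors to control: the scalar coefficient, and the vector $\erre_1(C)\times\erre_2(C)$. For the vector factor, $\erre_1(C)\times\erre_2(C)=\bzero$ would force $\erre_1(C)$ and $\erre_2(C)$ to be parallel, which is a codimension-one algebraic condition on the parameter vector $(\Att_1,\Att_2,\qu_1,\qu_2,\qudot_1,\qudot_2)$ and hence excluded generically; one exhibits a single explicit choice of data for which it fails, establishing that the condition is not identically satisfied. For the scalar factor, I would likewise compute it symbolically (it is a rational function of the parameters, with denominator a product of the nonvanishing quantities appearing in (\ref{rhojsecondo})) and exhibit one choice of data making it nonzero, so that it too is only zero on a proper algebraic subset.

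The main obstacle is the bookkeeping in reducing (\ref{DeltaKxDeltar}) at $C$: one must correctly use that at $C$ both angular momenta vanish (not merely are equal), track the signs in $\erredot_j\times\bDelta_r$, and verify that the resulting scalar coefficient is not forced to vanish by some hidden identity (which would make the lemma false). To rule this out cleanly, rather than fully simplifying the scalar I would just produce one numerical instance of the data — e.g. a generic pair of attributables and observer states — for which $\bxi(C)\neq\bzero$ by direct evaluation; since $\bxi(C)$ depends rationally (indeed polynomially, after clearing the fixed nonzero denominators) on the parameters, a single nonvanishing instance proves it is generically nonzero. A secondary point to check is that $C$ itself is well defined under (\ref{conicnondeg}), which was already established before Lemma~\ref{conicnotempty}.
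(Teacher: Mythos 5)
Your proposal is correct and lands on exactly the key identity of the paper's proof, namely that at $C$ one has $\bxi=\frac{1}{2}(|\erredot_1|^2-|\erredot_2|^2)\,\erre_1\times\erre_2$ (equation (\ref{dKdrceq0})), but it reaches it by a slightly different and arguably cleaner route: you substitute $\erredot_j=\lambda_j\erre_j$ (valid at $C$ since $\angmom_j=\erre_j\times\erredot_j=\bzero$ and $\erre_j\neq\bzero$) directly into the expansion (\ref{DeltaKxDeltar}), whereas the paper manipulates the combination $\mu(\lenz_1-\lenz_2)-(\energy_1-\energy_2)\erre_2$ using $\lenz_j=-\erre_j/|\erre_j|$ and then takes the cross product with $\bDelta_r$. (Your intermediate sign on the $j=2$ term is off, but the displayed coefficient $\frac{1}{2}(|\erredot_2|^2-|\erredot_1|^2)+\lambda_1^2|\erre_1|^2-\lambda_2^2|\erre_2|^2$ is right and simplifies to the correct scalar.) Where the two arguments genuinely differ is in certifying genericity of the two factors. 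The paper makes the degeneracy locus explicit: it projects $\erre_1\times\erre_2$ onto $\qu_1,\qu_2$ to show it vanishes at $C$ only if $\rho_j'=\rho_j''$ or $\rho_1''=\rho_2''=0$, and it shows that at $C$ each $|\erredot_j|^2$ is a function of the epoch-$j$ data alone and is individually nonzero when $\erho_j\times\eort_j\cdot\qudot_j\neq 0$. Your specialization argument (the quantity is rational in the parameters, so one nonvanishing instance suffices) is logically valid and shorter, but it does not identify \emph{which} explicit conditions on $(\Att_1,\Att_2,\qu_1,\qu_2,\qudot_1,\qudot_2)$ must be excluded; the paper needs those conditions concretely, since they are collected in its list of non-degeneracy assumptions. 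Also note that your plan defers the actual exhibition of a nonvanishing instance, so to be a complete proof it would still require either that computation or the paper's explicit analysis.
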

\begin{proof}
  By Lemma~\ref{conicnotempty}, if (\ref{conicnondeg}) holds, we have
  $\angmom_1=\angmom_2=\bzero$ in $C$, so that
\begin{eqnarray*}
&&\mu(\lenz_1-\lenz_2) - (\energy_1-\energy_2)\erre_2 = 
\mu\Bigl(\frac{\erre_2}{|\erre_2|} - \frac{\erre_1}{|\erre_1|}\Bigr) -
(\energy_1-\energy_2)\erre_2 \\
&=& -\mu\frac{\bDelta_r}{|\erre_1|} +
\frac{1}{2}(|\erredot_2|^2 - |\erredot_1|^2)\erre_2.
\end{eqnarray*}
Therefore 
we have
\begin{equation}
\bDelta_K\times\bDelta_r = \frac{1}{2}(|\erredot_1|^2 -
|\erredot_2|^2)\erre_1\times\erre_2.
\label{dKdrceq0}
\end{equation}
We show that the right-hand side of (\ref{dKdrceq0}) does
not vanish in $C$.  
In fact, by projecting $\erre_1\times\erre_2$ onto $\qu_1, \qu_2$
we obtain
\begin{eqnarray*}
&&\qu_1\cdot\erre_1\times\erre_2 =
  \rho_1''\left[\rho_2''(\erho_1\times\erho_2\cdot\qu_1) -
    \qu_1\times\qu_2\cdot\erho_1 \right],\\
&&\qu_2\cdot\erre_1\times\erre_2 =
  \rho_2''\left[\rho_1''(\erho_1\times\erho_2\cdot\qu_2) -
    \qu_1\times\qu_2\cdot\erho_2 \right],
\end{eqnarray*}
and the expressions in the brackets vanish only if
\[
\rho_1'=\rho_1'',\qquad
\rho_2'=\rho_2''.
\]
Moreover, $\rho_1''=\rho_2''=0$ occurs only if
$\qu_j\times\qudot_j\cdot\erho_j= 0$, $j=1,2$. Thus
$\erre_1\times\erre_2$ generically does not vanish.
Using $\angmom_1=0$ projected e.g. onto $\eort_1$, we can prove that
in $C$ the quantity $|\erredot_1|^2$ depends only on the data $\Att_1,
\qu_1, \qudot_1$ at epoch $\bar{t}_1$. In fact through this relation
we can find an expression for $\rhodot_1$ depending only on $\Att_1, \qu_1,
\qudot_1$. A similar result holds for $|\erredot_2|^2$.
We observe that the $|\erredot_j|^2$ do not vanish
individually.  
Indeed, if relations
$\erho_j\times\eort_j\cdot\qudot_j\neq 0$, $j=1,2$
hold, then $\erredot_j = \rhodot_j\erho_j + \rho_j\eort_j +
\qudot_j\neq\bzero$ for any choice of $\rho_j,\rhodot_j$.
Therefore generically also $|\erredot_1|^2 - |\erredot_2|^2$ does not vanish.
We conclude that the point $C$ is not a solution of
(\ref{qxi}).

\rightline{$\square$\hskip 0.5cm}

\end{proof}

\begin{lemma}
Assume $q=0$. Then $\bxi=\bzero$ is generically equivalent to
\begin{equation}
\left\{
\begin{array}{l}
\bxi\cdot\erho_1 =0\cr
\angmom\cdot\erho_1 \neq 0\cr
\end{array}
\right.
\quad \mbox{ or }\quad
\left\{
\begin{array}{l}
\bxi\cdot\erho_2 = 0\cr
\angmom\cdot\erho_2 \neq 0\cr
\end{array}
\right.
.
\label{alternative}
\end{equation}
\label{lem:altern}
\end{lemma}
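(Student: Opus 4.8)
The plan is to establish the equivalence by exploiting the observation already noted after (\ref{qxi}): when $q=0$, the vector $\bxi=\bDelta_K\times\bDelta_r$ is parallel to the common angular momentum $\angmom=\angmom_1=\angmom_2$. The forward implication is immediate, since $\bxi=\bzero$ forces both $\bxi\cdot\erho_1=0$ and $\bxi\cdot\erho_2=0$, and generically at least one of $\angmom\cdot\erho_1$, $\angmom\cdot\erho_2$ is nonzero (if both vanished, $\angmom$ would be orthogonal to the plane spanned by $\erho_1,\erho_2$, a non-generic configuration; moreover at the only point where $\angmom$ could be problematic, namely where $\angmom=\bzero$, Lemma~\ref{conicnotempty} locates it at $C$ and Lemma~\ref{czeroxinonzero} shows $C$ is not a solution of (\ref{qxi}) anyway). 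So the real content is the converse.

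For the converse, suppose $q=0$ and, say, the first alternative in (\ref{alternative}) holds: $\bxi\cdot\erho_1=0$ while $\angmom\cdot\erho_1\neq0$. Since $q=0$, we know $\bxi$ is parallel to $\angmom$, so write $\bxi=\lambda\,\angmom$ for some scalar $\lambda$ (this needs $\angmom\neq\bzero$, which follows from $\angmom\cdot\erho_1\neq0$). Dotting with $\erho_1$ gives $0=\bxi\cdot\erho_1=\lambda\,(\angmom\cdot\erho_1)$, and since $\angmom\cdot\erho_1\neq0$ we conclude $\lambda=0$, hence $\bxi=\bzero$. The argument with the second alternative is symmetric, using $\erho_2$ in place of $\erho_1$.

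The one step that must be handled carefully — and the natural place for the word ``generically'' to do work — is the justification that $q=0$ indeed implies $\bxi\parallel\angmom$, together with the fact that the degenerate locus where neither $\angmom\cdot\erho_1\neq0$ nor $\angmom\cdot\erho_2\neq0$ can occur (other than at $C$, already excluded) is a proper subvariety of the parameter space. The parallelism itself I would derive by recalling the chain (\ref{u2elim})–(\ref{mixLE}): on $q=0$ the angular momenta coincide, $\angmom_1=\angmom_2=\angmom$, so $\bDelta_K+(\tfrac12|\erredot_1|^2-u_1)\bDelta_r$ lies along... more precisely, one uses that $\mu(\lenz_1-\lenz_2)-(\energy_1-\energy_2)\erre_2$, upon taking the cross product with $\bDelta_r$, yields $\bxi$, and since $\angmom\cdot\lenz_j=0$ by (\ref{intrel}) and $\angmom$ is the common value, $\bxi=\bDelta_K\times\bDelta_r$ is orthogonal to both $\bDelta_r$ and $\angmom$'s orthogonal complement in the relevant plane, forcing $\bxi\in\mathrm{span}(\angmom)$. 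I expect the bookkeeping of exactly which cross and dot products to invoke — and pinning down the precise non-generic exceptional set — to be the main obstacle; the algebra of the equivalence proper is routine once the parallelism is in hand.
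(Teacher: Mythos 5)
Your proof is correct and follows essentially the same route as the paper's: both hinge on the fact that $\bxi$ is parallel to the common value $\angmom$ of the angular momentum when $q=0$, together with Lemmas~\ref{conicnotempty} and~\ref{czeroxinonzero} to rule out the only point where $\angmom\cdot\erho_1=\angmom\cdot\erho_2=0$, namely $C$ (the paper merely phrases both implications contrapositively). Only your closing justification of the parallelism is more tangled than necessary — and your aside about $\angmom$ being orthogonal to the span of $\erho_1,\erho_2$ being ``non-generic'' is not the right mechanism — since on $q=0$ one has $\angmom_1=\angmom_2=\angmom$ and $\angmom$ is orthogonal to each $\erre_j$ and $\erredot_j$, so both $\bDelta_K$ and $\bDelta_r$ lie in the plane orthogonal to $\angmom$ and their cross product is automatically along $\angmom$.
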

\begin{proof}
  Assume (\ref{alternative}) does not hold.  Clearly relations
  $\bxi\cdot\erho_1 = \bxi\cdot\erho_2 =0$ are necessary to have
  $\bxi=\bzero$.  Then we have $\angmom\cdot\erho_1=\angmom\cdot\erho_2=0$.
  If (\ref{conicnondeg}) holds, Lemma~\ref{conicnotempty} implies
  $\angmom_1=\angmom_2=\bzero$ and, by Lemma~\ref{czeroxinonzero}, generically
  we have $\bxi\neq\bzero$.
  Viceversa, assuming $\bxi\neq\bzero$ we obtain that
  each system in (\ref{alternative}) is incompatible, because
  for $q=0$ we have $\bxi\times\angmom=\bzero$.

\rightline{$\square$\hskip 0.5cm}

\end{proof}

\begin{lemma}
Assume relation
\begin{equation}
\bDelta_q\cdot\erho_1\times\erho_2\neq 0
\label{basis}
\end{equation}
holds. Then
 $\bxi=\bzero$ is equivalent to
\[
\bxi\cdot\erho_1=0 \qquad \mbox{ and }\qquad \bxi\cdot\erho_2=0.
\]
\label{lem:basis}
\end{lemma}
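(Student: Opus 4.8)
The plan is to exploit the way $\bxi$ is built: since $\bxi = \bDelta_K\times\bDelta_r$, the scalar triple product with the repeated factor $\bDelta_r$ vanishes identically, so $\bxi\cdot\bDelta_r = 0$ for \emph{every} value of $\rho_1,\rho_2$. Consequently, if the two scalar equations $\bxi\cdot\erho_1 = 0$ and $\bxi\cdot\erho_2 = 0$ also hold, then $\bxi$ is orthogonal to all three of $\erho_1$, $\erho_2$, $\bDelta_r$. The whole claim thus reduces to showing that, under hypothesis (\ref{basis}), the triple $\{\erho_1,\erho_2,\bDelta_r\}$ is a basis of $\R^3$.

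To do that, I would substitute $\bDelta_r = \erre_1 - \erre_2 = \bDelta_q + \rho_1\erho_1 - \rho_2\erho_2$, with $\bDelta_q = \qu_1 - \qu_2$, into the determinant $\bDelta_r\cdot(\erho_1\times\erho_2)$: the contributions coming from $\rho_1\erho_1$ and $\rho_2\erho_2$ drop out, leaving
\[
\bDelta_r\cdot(\erho_1\times\erho_2) = \bDelta_q\cdot(\erho_1\times\erho_2).
\]
This is exactly the quantity appearing in (\ref{basis}); it depends only on the data (the observer positions and the two lines of sight) and not on the unknowns $\rho_1,\rho_2$. Hence (\ref{basis}) guarantees the linear independence of the three vectors.

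The forward implication of the equivalence is immediate. For the converse, $\bxi\cdot\erho_1 = \bxi\cdot\erho_2 = 0$ together with the identity $\bxi\cdot\bDelta_r = 0$ force $\bxi$ to be orthogonal to a basis of $\R^3$, hence $\bxi = \bzero$. I do not expect any genuine obstacle in this argument: the only point worth emphasizing is the cancellation that makes the relevant $3\times3$ determinant independent of $\rho_1,\rho_2$, which is precisely what allows (\ref{basis}) to be imposed as a genericity condition on the attributables and the observer, and what turns the pair of scalar equations $\bxi\cdot\erho_1=0$, $\bxi\cdot\erho_2=0$ into the full vector equation $\bxi=\bzero$.
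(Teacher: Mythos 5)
Your proof is correct and follows essentially the same route as the paper: the identity $\bxi\cdot\bDelta_r=(\bDelta_K\times\bDelta_r)\cdot\bDelta_r=0$ plus the fact that (\ref{basis}) makes $\{\erho_1,\erho_2,\bDelta_r\}$ a basis of $\R^3$. The only difference is that you spell out the cancellation $\bDelta_r\cdot\erho_1\times\erho_2=\bDelta_q\cdot\erho_1\times\erho_2$ with $\bDelta_q=\qu_1-\qu_2$, which the paper leaves implicit; that is a useful clarification, not a deviation.
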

\begin{proof}
Relation $\bxi\cdot\bDelta_r=0$ holds trivially. Moreover, since
(\ref{basis}) is satisfied, the vectors $\erho_1,\erho_2,\bDelta_r$ are
linearly independent.

\rightline{$\square$\hskip 0.5cm}

\end{proof}

\noindent We introduce the polynomials
\[
p_1 = \bxi\cdot\erho_1,
\hskip 0.7cm
p_2 = \bxi\cdot\erho_2.
\]
By Remark~\ref{rem:deg6} both $p_1$ and $p_2$ have total degree 5 in the
variables $\rho_1$, $\rho_2$. 

\noindent We consider the system
\begin{equation}
 q = p_1 = p_2 = 0.
\label{p1p2q}
\end{equation}



We are now ready to state the main result.
\begin{theorem}
  Generically, system (\ref{qxi}) is consistent and can be reduced to a system
  of two univariate polynomials $\mathfrak{u}_1,\mathfrak{u}_2$ whose greatest
  common divisor has degree 9.
\end{theorem}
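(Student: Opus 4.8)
The plan is to reduce system~(\ref{qxi}) to the polynomial system~(\ref{p1p2q}), then to eliminate $\rho_1$ by means of the quadratic $q$, obtaining two univariate resultants $\mathfrak{u}_1,\mathfrak{u}_2$, and finally to identify by a geometric argument the single spurious root that each of them carries beyond the common solutions. Assuming throughout the generic condition~(\ref{conicnondeg}), I would first show that $\{q=0\}\cap\{\bxi=\bzero\}=\{q=p_1=p_2=0\}$. The inclusion ``$\subseteq$'' is immediate. For ``$\supseteq$'', recall that on $q=0$ one has $\angmom_1=\angmom_2=:\angmom$ and $\bxi=\lambda\angmom$ for a scalar $\lambda$; if $p_1=p_2=0$ and $\lambda\neq0$, then $\angmom\cdot\erho_1=\angmom\cdot\erho_2=0$, i.e.\ $c_{11}=c_{22}=0$, so the point is $C$ by Lemma~\ref{conicnotempty}, which is excluded since generically $p_1(C)\neq0$ --- this last fact follows by projecting onto $\erho_1$ the expression $\bxi(C)=\tfrac{1}{2}(|\erredot_1|^2-|\erredot_2|^2)\,\erre_1\times\erre_2$ obtained in the proof of Lemma~\ref{czeroxinonzero}. (This is essentially the content of Lemmas~\ref{lem:altern}--\ref{lem:basis}.) Hence it suffices to analyse~(\ref{p1p2q}).

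Since $q$ is quadratic in $\rho_1$ with leading coefficient the nonzero constant $q_{2,0}$, I reduce $p_1,p_2$ modulo $q$ to $\bar p_j=A_j(\rho_2)\rho_1+B_j(\rho_2)$ and put
\[
\mathfrak{u}_j=\mathrm{Res}_{\rho_1}(q,\bar p_j)=q_{2,0}B_j^2-q_{1,0}A_jB_j+q_0(\rho_2)A_j^2,\qquad q_0=q_{0,2}\rho_2^2+q_{0,1}\rho_2+q_{0,0}.
\]
By Remark~\ref{rem:deg6} the $p_j$ have total degree $5$, and a routine inspection of the reduction (each removal of $\rho_1^2$ via $\rho_1^2\equiv-(q_{1,0}\rho_1+q_0)/q_{2,0}$ raises the $\rho_2$--degree by at most $2$) gives $\deg A_j\le4$, $\deg B_j\le5$, whence $\deg\mathfrak{u}_j\le10$, with equality for generic data. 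Because $q$ is irreducible over $\C(\rho_2)$ and $q\nmid p_j$ (indeed $p_j(C)\neq0$ while $C\in\{q=0\}$), each $\mathfrak{u}_j$ is a nonzero polynomial whose roots are exactly the $\rho_2$--coordinates of the points of $\{q=0\}\cap\{p_j=0\}$; by Bezout and genericity --- no intersection at infinity, all intersections simple, pairwise distinct $\rho_2$--coordinates --- this set has exactly $10$ affine points, and any solution of~(\ref{p1p2q}) projects to a common root of $\mathfrak{u}_1,\mathfrak{u}_2$.

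The heart of the matter is to locate the spurious points. Let $Q\in\{q=0\}\cap\{p_1=0\}$ and write $\bxi=\lambda\angmom$. If $\angmom(Q)\cdot\erho_1\neq0$, then $\lambda=0$, hence $\bxi(Q)=\bzero$ and so $p_2(Q)=0$. If $\angmom(Q)\cdot\erho_1=0$, i.e.\ $c_{11}(Q)=0$, then $\rho_1(Q)=\rho_1''$ by~(\ref{rhojsecondo}), so $Q$ is one of the two points of $\{q=0\}\cap\{\rho_1=\rho_1''\}$; one is $C$ (Lemma~\ref{conicnotempty}) and $Q\neq C$ since $p_1(C)\neq0$, so $Q$ is the other, at which $\angmom\neq\bzero$ (Lemma~\ref{conicnotempty} again), hence $\angmom\parallel\erre_1\times\erre_2$, so $\erre_1\times\erre_2\cdot\erho_1=0$ and $Q=(\rho_1'',\rho_2')=P_1$ by~(\ref{rhojprimo}); moreover $\angmom(P_1)\cdot\erho_2=c_{22}(P_1)\neq0$ because generically $\rho_2'\neq\rho_2''$, so $p_2(P_1)\neq0$. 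Thus $\{q=0\}\cap\{p_1=0\}$ is the disjoint union of $\{P_1\}$ and the $9$ solutions of~(\ref{p1p2q}); symmetrically $\{q=0\}\cap\{p_2=0\}=\{P_2\}\sqcup(\text{the same }9\text{ points})$ with $P_2=(\rho_1',\rho_2'')$. Generically the $9$ projections together with $\rho_2'$ and $\rho_2''$ are $10$ distinct numbers, so $\mathfrak{u}_1$ has simple roots at those $9$ values and at $\rho_2'$, and $\mathfrak{u}_2$ at the same $9$ values and at $\rho_2''$; hence $\gcd(\mathfrak{u}_1,\mathfrak{u}_2)$ has degree exactly $9$. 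Being non-constant, it yields a solution of~(\ref{p1p2q}), hence of~(\ref{qxi}), i.e.\ of~(\ref{sys2solve}), proving consistency; interchanging $\rho_1$ and $\rho_2$ in the elimination gives instead two univariate polynomials in $\rho_1$ with the same property.

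The step I expect to be the main obstacle is the last one: showing, for generic data, that the Bezout bound for $\{q=0\}\cap\{p_1=0\}$ is attained (no intersection at infinity, all intersections simple, distinct $\rho_2$--coordinates) and, above all, that the \emph{only} point of that intersection failing to solve~(\ref{p1p2q}) is $P_1$. This requires combining Lemmas~\ref{conicnotempty}--\ref{lem:basis} with the genericity of $p_j(C)\neq0$ and of $\rho_2'\neq\rho_2''$. The equalities $\deg\mathfrak{u}_j=10$ and the genericity claims are routine in principle, but are cleanest to secure by exhibiting one explicit data set on which each of them holds.
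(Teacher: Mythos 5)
Your proposal is correct and follows essentially the same route as the paper's proof: reduction of (\ref{qxi}) to $q=p_1=p_2=0$, a B\'ezout count of $10$ points for each pair $q=p_j=0$, identification of $P_j$ (via the locus $c_{jj}=0$ on the conic, with $C$ excluded by Lemma~\ref{czeroxinonzero}) as the unique spurious intersection, and division of the degree-$10$ resultants by $\rho_2-\rho_2'$ and $\rho_2-\rho_2''$ to obtain the degree-$9$ gcd. Your direct case analysis on $\angmom(Q)\cdot\erho_1$ simply unwinds Lemma~\ref{lem:altern}, and the residual genericity assumptions you flag (e.g.\ $\bxi(P_1)\neq\bzero$, distinctness of the $\rho_2$-coordinates) are the same ones the paper lists as non-degeneracy conditions.
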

\begin{proof}
  By Lemma~\ref{lem:basis} we only need to prove consistency of system
  (\ref{p1p2q}).
  First we show that system (\ref{qxi}) has at least $9$ solutions. In fact
  Lemma~\ref{lem:altern} implies that both systems $q=p_1=0$ and $q=p_2=0$
  define generically 10 points. Moreover, for $q=0$ relation $c_{11}\neq 0$
  discards the points $P_1$, $C$, while relation $c_{22}\neq 0$ discards the
  points $P_2$, $C$.  By Lemma~\ref{czeroxinonzero} we know that generically
  $C$ is not a solution of $\bxi=\bzero$, hence it does not solve either
  $p_1=0$ or $p_2=0$.  On the other hand, we have $p_1(P_1) = p_2(P_2)= 0$.
  We show that $p_1(P_1) = 0$.  If $\bxi(P_1)=\bzero$ the results
  trivially holds.  If $\bxi(P_1)\neq\bzero$, by Lemma~\ref{conicnotempty} we
  have $\angmom(P_1)\neq\bzero$, so that $\angmom(P_1)\parallel \bxi(P_1)$,
  and from $c_{11}(P_1)=0$ we obtain $p_1(P_1)=0$.
  In a similar way we can prove that $p_2(P_2) = 0$.  Therefore we are left
  with 9 solutions for each system. We show that they are the same ones.  In
  fact, by Lemma~\ref{lem:basis}, solutions of (\ref{qxi}) must satisfy
  $p_1=p_2=0$. Moreover, generically we have $p_1(P_2) \neq 0$ and $p_2(P_1)
  \neq 0$, so that both $P_1$ and $P_2$ are not solutions.  Therefore we have
  exactly 9 solutions.

  \noindent Let us consider the univariate polynomials
  \begin{equation}
    \mathfrak{u}_j= \mathrm{res}(p_j,q,\rho_1),
    \qquad
    j=1,2,
    \label{uj}
  \end{equation}
  that are the resultant of the pairs $p_j,q$ with respect to $\rho_1$ (see
  \cite{cox}).

  \noindent The root $\rho_2=\rho_2'$ of $\mathfrak{u}_1$ and the root
  $\rho_2=\rho_2''$ of $\mathfrak{u}_2$ must be discarded because they
  correspond to the points $P_1$, $P_2$ for the polynomials $p_1, p_2$
  respectively.

  \noindent We consider
  \[
  \tilde{\mathfrak{u}}_1 = \frac{\mathfrak{u}_1}{\rho_2-\rho_2'}, \qquad \tilde{\mathfrak{u}}_2 =
  \frac{\mathfrak{u}_2}{\rho_2-\rho_2''}.
  \]
  By the discussion above, these polynomials have degree 9 and must have the
  same roots: in particular, up to a constant factor, they both correspond to
  the greatest common divisor of $\mathfrak{u}_1$ and $\mathfrak{u}_2$.

  \rightline{$\square$\hskip 0.5cm}

\end{proof}

\subsection{The univariate polynomials $\mathfrak{u}_1$, $\mathfrak{u}_2$}

We explicitly perform the elimination step to pass from system
$q = p_1 = p_2 = 0$
to
\[
\mathfrak{u}_1 = \mathfrak{u}_2 = 0,
\]
where $\mathfrak{u}_1, \mathfrak{u}_2$ are the two univariate polynomials
introduced in (\ref{uj}).
For this purpose we produce an equivalent system $q = \tilde{p}_1 =
\tilde{p}_2 = 0$ where the $\tilde{p}_j$ are linear in one variable, say
$\rho_1$.

Assume $q$ is not degenerate, i.e. 
$q_{2,0}, q_{0,2}\neq 0$. 
Indeed, here we use $q_{2,0}\neq 0$ only, while $q_{0,2}\neq 0$ is necessary
for the similar construction relative to $\rho_2$.

\noindent We write
\[
q(\rho_1,\rho_2) = \sum_{i,j=0}^2q_{i,j}\rho_1^i\rho_2^j = 
\sum_{h=0}^2b_h(\rho_2)\rho_1^h,
\]
where
\[
b_0(\rho_2) = q_{0,2}\rho_2^2 +  q_{0,1}\rho_2 +  q_{0,0},
\qquad
b_1 = q_{1,0},
\qquad
b_2 = q_{2,0}.
\]
Moreover we write
\begin{eqnarray}
p_1(\rho_1,\rho_2) &=& 
 \sum_{i,j=0}^5p^{(1)}_{i,j}\rho_1^i\rho_2^j = \sum_{h=0}^4a_{1,h}(\rho_2)\rho_1^h,
\label{eqp1}\\
p_2(\rho_1,\rho_2) &=&
 \sum_{i,j=0}^5p^{(2)}_{i,j}\rho_1^i\rho_2^j = 
\sum_{h=0}^5a_{2,h}(\rho_2)\rho_1^h,\label{eqp2}
\end{eqnarray}
for some polynomials $a_{k,h}$ whose degrees are described by the small
circles used to construct Newton's polygons of $p_1,p_2$ in
Figure~{\ref{polynewt}.
From $q=0$
we obtain
\begin{equation}
\rho_1^h = 
\beta_h\rho_1 + \gamma_h, \qquad h=2,3,4,5
\label{rho1h}
\end{equation}
where
\[
\beta_2 = -\frac{b_1}{b_2}, \qquad \gamma_2 = -\frac{b_0}{b_2},
\]
and
\[
\beta_{h+1} = \beta_h\beta_2 + \gamma_h, \qquad \gamma_{h+1} = \beta_h\gamma_2, \qquad h=2,3,4. 
\]
Inserting (\ref{rho1h}) into (\ref{eqp1}), (\ref{eqp2}) we obtain
\begin{equation}
\tilde{p}_j(\rho_1,\rho_2) = \tilde{a}_{j,1}(\rho_2)\rho_1 + 
\tilde{a}_{j,0}(\rho_2), \hskip 1cm  j=1,2
\label{pjtilde}
\end{equation}
where
\begin{eqnarray*}
&&\tilde{a}_{1,1} = a_{1,1} + \sum_{h=2}^4 a_{1,h}\beta_h,
\hskip 0.5cm 
\tilde{a}_{1,0} = a_{1,0} + \sum_{h=2}^4 a_{1,h}\gamma_h,\\
&&\tilde{a}_{2,1} = a_{2,1} + \sum_{h=2}^5 a_{2,h}\beta_h,
\hskip 0.5cm 
\tilde{a}_{2,0} =  a_{2,0} + \sum_{h=2}^5 a_{2,h}\gamma_h.
\end{eqnarray*}

\begin{figure}[!h]
\psfragscanon
\centerline{\includegraphics[width=12cm]{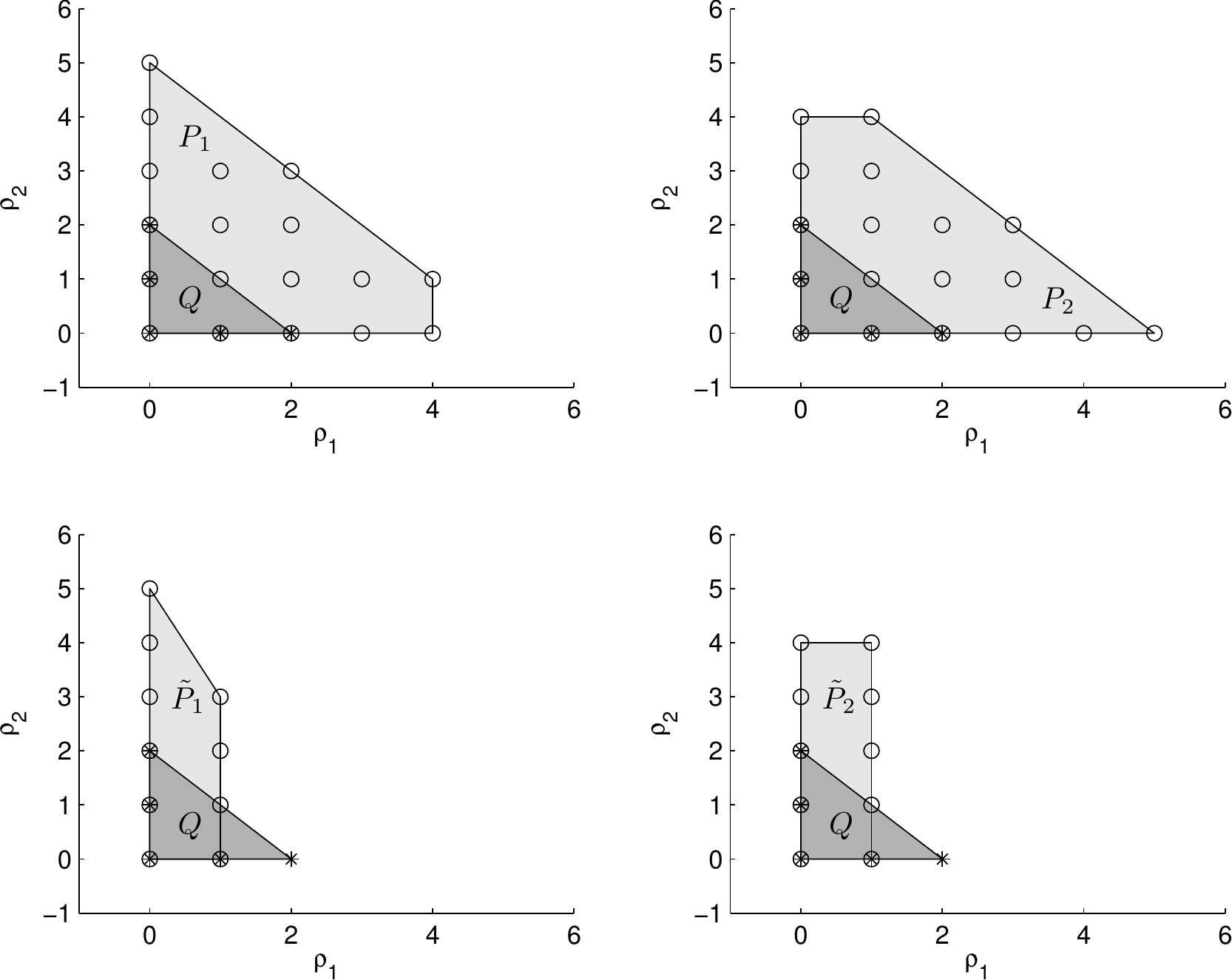}}
\psfrag{Q}{$Q$}\psfrag{P1}{$P_1$}\psfrag{P2}{$P_2$}
\psfrag{TP1}{$\tilde{P}_1$}\psfrag{TP2}{$\tilde{P}_2$}
\psfragscanoff
\caption{Newton's polygons $Q$, $P_j$, $\tilde{P}_j$ for the polynomials $q$,
  $p_j, \tilde{p}_j$, $j=1,2$. The nodes with circles correspond to
  (multi-index) exponents of the monomials in $p_j$, $\tilde{p}_j$; the nodes
  with asterisks correspond to exponents of the monomials in $q$.}
\label{polynewt}
\end{figure}

\noindent In Figure~\ref{polynewt}, bottom left and right, we draw
Newton's polygons of $\tilde{p}_1$, $\tilde{p}_2$, which also
describe the degrees of the polynomials $\tilde{a}_{k,h}$.

\noindent From (\ref{pjtilde}) we get two expressions for $\rho_1$:
\[
\rho_1 = -\frac{\tilde{a}_{1,0}}{\tilde{a}_{1,1}},
\hskip 1cm
\rho_1 = -\frac{\tilde{a}_{2,0}}{\tilde{a}_{2,1}}.
\]
By substituting these expressions into $q$ we obtain two univariate polynomials
of degree 10 in the variable $\rho_2$:
\begin{eqnarray*}
\mathfrak{v}_1 &=& q_{2,0}\tilde{a}^2_{1,0} - q_{1,0}\tilde{a}_{1,0}\tilde{a}_{1,1} + 
b_0\tilde{a}_{1,1}^2 , \\
\mathfrak{v}_2 &=& q_{2,0}\tilde{a}^2_{2,0} - q_{1,0}\tilde{a}_{2,0}\tilde{a}_{2,1} + 
b_0\tilde{a}_{2,1}^2. 
\end{eqnarray*}
Using the properties of resultants we find that
\[
\mathfrak{u}_1 = q_{2,0}^3 \mathfrak{v}_1, \qquad \mathfrak{u}_2 = q_{2,0}^4 \mathfrak{v}_2.
\]

\subsection{Non-degeneracy conditions}

We list below the conditions on the data $\Att_1, \Att_2,
\qu_1,\qu_2,\qudot_1,\qudot_2$ that we used in the previous sections. These
conditions generically hold.

\begin{enumerate}
  
\item $\EE_1\cdot\DD_1\times\DD_2, \EE_2\cdot\DD_1\times\DD_2 \neq 0$,
  so that $q$ is a quadratic polynomial both in $\rho_1$ and $\rho_2$. An
  interpretation of these relations is given in
  \cite{gdm10}. Moreover, these conditions imply:
  \begin{itemize} 
  \item[i)] $\DD_1\times\DD_2\neq \bzero$, so that we can compute
    $\rhodot_j=\rhodot_j(\rho_1,\rho_2)$ from system
    (\ref{eq_AM}). This condition also implies $\DD_1,\DD_2\neq\bzero$,
    which in turn yield $\erre_1,\erre_2\neq\bzero$ for all
    $\rho_1,\rho_2$;
    
  \item[ii)] $\erho_1\times\eort_1\cdot\qu_1,
    \erho_2\times\eort_2\cdot\qu_2 \neq 0$, which are used to define
    $\rho_1'', \rho_2''$ respectively;
    
  \item[iii)] $\erho_1\times\erho_2\cdot\qu_1,
    \erho_1\times\erho_2\cdot\qu_2\neq 0$, which are used to define $\rho_2',
    \rho_1'$ respectively.

  \end{itemize}
  
\item $\qu_1\times\qu_2\neq\bzero$, $\rho_j'\neq\rho_j''$, $\rho_j''\neq 0$,
  $j=1,2$.
  These conditions imply $\erre_1\times\erre_2\neq\bzero$, and also
  $\erre_1,\erre_2,\bDelta_r\neq \bzero$, for all $\rho_1,\rho_2$.

\item $\erho_j\times\eort_j\cdot\qudot_j\neq 0$, so that
  $\erredot_j\neq\bzero$ for all $\rho_j,\rhodot_j$, with $j=1,2$.



\item $\bDelta_q\cdot\erho_1\times\erho_2 \neq 0$, so that
  $\{\erho_1,\erho_2,\bDelta_r\}$ forms a basis of $\R^3$. This condition
  implies $\erho_1\times\erho_2\neq\bzero$, so that the maximal total degree
  for the components of equation (\ref{eqdeg6}) is 6.

\item $p_1(P_2), p_2(P_1)\neq 0$, so that $P_1$, $P_2$ are not solutions of
  (\ref{qxi}).

\end{enumerate}

\section{Selecting the solutions}
\label{s:selsol}

After computing all the solutions of (\ref{qxi}) we can select the ones with
both entries $\rho_1, \rho_2$ real and positive and compute the corresponding
values of $\rhodot_1, \rhodot_2$.  However, since equations (\ref{sys2solve})
impose only some of the laws of the two-body dynamics, we expect that some of
the remaining solutions yield vectors
$(\rho_1,\rho_2,\rhodot_1,\rhodot_2,u_1,u_2)$ which do not solve
(\ref{fullsys}).

\noindent In this section we characterize the spurious solutions, and
propose some algorithms to select the good ones.
To decide which solutions can be accepted we suggest to use one of the two
methods introduced in \cite{gdm10}, \cite{gfd11}.
They take into account the errors in the observations, which can be
represented by $4\times 4$ covariance matrices $\Gamma_1$, $\Gamma_2$ of the
attributables.
We recall that the first method relies on the computation of a norm
referring to some compatibility conditions, see \cite{gdm10}, Section
5.
The second method requires to compute a covariance matrix for the candidate
preliminary orbits, which is used for the attribution algorithm, see
\cite{gfd11}, Sections 7, 8.  The new formulas for the covariance matrix
are provided in Section~\ref{s:covar}.

\subsection{Spurious solutions}
\label{s:spurious}

We consider real solutions of (\ref{sys2solve}) that do not
solve the intermediate system (\ref{intersys}), and solutions of
(\ref{intersys}) that do not solve (\ref{fullsys}).
In the first case the spurious solutions do not satisfy
\[
\bDelta_K\cdot\bDelta_r + \Bigl(\frac{1}{2}|\erredot_1|^2 -
\frac{\mu}{|\erre_1|}\Bigr)|\bDelta_r|^2 = 0.
\]
Concerning the second case we prove the following:
\begin{proposition}
Each real solution of (\ref{intersys}) which does not solve
(\ref{fullsys}) fulfills
\[
|\lenz_1 - \lenz_2|= 2.
\]
\label{spurious2}
\end{proposition}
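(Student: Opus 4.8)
The plan is to exploit the algebraic relations (\ref{intrel}) that tie together the angular momentum, the energy, and the Laplace--Lenz vector, and to track which of these relations survive in system (\ref{intersys}) and which do not. Recall that (\ref{intersys}) consists of $\angmom_1=\angmom_2$, the vector equation $\mu(\lenz_1-\lenz_2)=(\energy_1-\energy_2)\erre_2$, and $u_1^2|\erre_1|^2=\mu^2$; the full system (\ref{fullsys}) additionally requires $\energy_1=\energy_2$ and $u_2^2|\erre_2|^2=\mu^2$. A solution of (\ref{intersys}) that is not a solution of (\ref{fullsys}) must therefore violate at least one of these two extra conditions. I would first argue that $\energy_1\ne\energy_2$ is forced for such a spurious solution: if $\energy_1=\energy_2$ held, then the Laplace--Lenz equation in (\ref{intersys}) would reduce to $\lenz_1=\lenz_2$, and together with $\angmom_1=\angmom_2$ and $u_1^2|\erre_1|^2=\mu^2$ this reconstructs a genuine two-body orbit, in particular one can recover $u_2$ from $\mu\lenz_2=(|\erredot_2|^2-u_2)\erre_2-(\erredot_2\cdot\erre_2)\erredot_2$ and check it satisfies $u_2^2|\erre_2|^2=\mu^2$ — so the solution would in fact solve (\ref{fullsys}), a contradiction.

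With $\diffenergy:=\energy_1-\energy_2\ne 0$ in hand, the key step is to extract $|\lenz_1-\lenz_2|$. Taking the squared norm of the vector equation $\mu(\lenz_1-\lenz_2)=\diffenergy\,\erre_2$ gives $\mu^2|\lenz_1-\lenz_2|^2=\diffenergy^2|\erre_2|^2$. Now I use the second relation in (\ref{intrel}), namely $\mu^2|\lenz_j|^2=\mu^2+2\energy_j|\angmom_j|^2$, which holds identically in $\rho_j,\rhodot_j$ for each $j$. Since $\angmom_1=\angmom_2=:\angmom$ on our solution, subtracting the $j=2$ relation from the $j=1$ relation yields $\mu^2\big(|\lenz_1|^2-|\lenz_2|^2\big)=2\diffenergy\,|\angmom|^2$. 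I also need a handle on $\lenz_1\cdot\lenz_2$: from $\angmom_j\cdot\lenz_j=0$ and $\angmom_1=\angmom_2$, both $\lenz_1$ and $\lenz_2$ are orthogonal to the common vector $\angmom$, so they lie in a common plane, but that alone does not pin down $\lenz_1\cdot\lenz_2$. Instead I would feed the vector equation back in: $\mu\lenz_1=\mu\lenz_2+\diffenergy\,\erre_2$, so $\mu^2|\lenz_1|^2=\mu^2|\lenz_2|^2+2\mu\diffenergy\,\lenz_2\cdot\erre_2+\diffenergy^2|\erre_2|^2$. Comparing with the previous display, $2\mu\diffenergy\,\lenz_2\cdot\erre_2+\diffenergy^2|\erre_2|^2=2\diffenergy|\angmom|^2$, and dividing by $\diffenergy\ne 0$ gives $2\mu\,\lenz_2\cdot\erre_2+\diffenergy|\erre_2|^2=2|\angmom|^2$.

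The remaining ingredient is the classical orbit identity $\mu\,\lenz_j\cdot\erre_j=|\angmom_j|^2-\mu|\erre_j|$, which follows directly from the definition (\ref{laplace_lenz}): dotting $\mu\lenz_j=(|\erredot_j|^2-\mu/|\erre_j|)\erre_j-(\erredot_j\cdot\erre_j)\erredot_j$ with $\erre_j$ and using $|\angmom_j|^2=|\erre_j|^2|\erredot_j|^2-(\erredot_j\cdot\erre_j)^2$ gives $\mu\,\lenz_j\cdot\erre_j=|\erre_j|^2|\erredot_j|^2-\mu|\erre_j|-(\erredot_j\cdot\erre_j)^2=|\angmom_j|^2-\mu|\erre_j|$. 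Taking $j=2$ and $|\angmom_2|^2=|\angmom|^2$, substitute $2\mu\,\lenz_2\cdot\erre_2=2|\angmom|^2-2\mu|\erre_2|$ into the boxed relation above: $2|\angmom|^2-2\mu|\erre_2|+\diffenergy|\erre_2|^2=2|\angmom|^2$, hence $\diffenergy|\erre_2|^2=2\mu|\erre_2|$, i.e. $\diffenergy|\erre_2|=2\mu$ (note $|\erre_2|\ne 0$ by the non-degeneracy conditions). Squaring gives $\diffenergy^2|\erre_2|^2=4\mu^2$, and combining with $\mu^2|\lenz_1-\lenz_2|^2=\diffenergy^2|\erre_2|^2$ yields $\mu^2|\lenz_1-\lenz_2|^2=4\mu^2$, that is $|\lenz_1-\lenz_2|=2$, as claimed.

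The main obstacle I anticipate is not any single computation but making the first step — that $\energy_1\ne\energy_2$ on a spurious solution — fully rigorous, since it requires showing that $\energy_1=\energy_2$ together with the other equations of (\ref{intersys}) genuinely forces $u_2^2|\erre_2|^2=\mu^2$; this needs the observation that on such a solution $\lenz_2$ coincides with the honest Laplace--Lenz vector built from $(\erre_2,\erredot_2)$, whose norm automatically satisfies $\mu^2|\lenz_2|^2=\mu^2+2\energy_2|\angmom_2|^2$, so that $u_2$ reconstructed from the energy relation does satisfy the missing quadratic constraint. Everything after that is a short chain of identities. One should also record the sign: the derivation gives $\diffenergy|\erre_2|=2\mu>0$, consistent with $\diffenergy\ne0$, and the absolute value in the statement absorbs the sign ambiguity from the squaring.
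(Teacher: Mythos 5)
Your argument is correct, and every step checks out: the norm of the vector equation gives $\mu^2|\lenz_1-\lenz_2|^2=\diffenergy^2|\erre_2|^2$, the second relation in (\ref{intrel}) with the common angular momentum gives $\mu^2(|\lenz_1|^2-|\lenz_2|^2)=2\diffenergy|\angmom|^2$, and the focal identity $\mu\,\lenz_2\cdot\erre_2=|\angmom|^2-\mu|\erre_2|$ closes the loop to yield $\diffenergy|\erre_2|=2\mu$ and hence the claim. The ingredients are the same ones the paper uses, but the packaging is genuinely different: the paper first rewrites the right-hand side as $\mu(|\lenz_1|^2-|\lenz_2|^2)\erre_2/(2|\angmom|^2)$, then passes to norms, substitutes the polar form $|\erre_2|=|\angmom|^2/\bigl(\mu(1+|\lenz_2|\cos\theta_2)\bigr)$, and expands $|\lenz_1|^2-|\lenz_2|^2$ by the law of cosines — which forces a case split on whether $\lenz_1-\lenz_2$ is parallel or antiparallel to $\erre_2$ (i.e.\ $|\lenz_1|\gtrless|\lenz_2|$), with the second case ruled out by a sign contradiction. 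Your version replaces the angle $\theta_2$ by the signed dot product $\lenz_2\cdot\erre_2$ throughout, so the orientation bookkeeping disappears and no case analysis is needed; it also never divides by $|\angmom|^2$, and the intermediate identity $\diffenergy|\erre_2|=2\mu$ is a clean equivalent restatement of the conclusion. The one place where you are, rightly, more careful than the paper is the preliminary step: since $u_2$ does not occur in (\ref{intersys}), deciding whether a solution ``solves (\ref{fullsys})'' requires interpreting $\lenz_2,\energy_2$ as the honest quantities with $u_2=\mu/|\erre_2|$, and your remark that the reconstructed $u_2$ automatically satisfies the missing quadratic constraint is exactly what makes the reduction to $\diffenergy\neq 0$ rigorous (the paper phrases the same dichotomy via $|\lenz_1|=|\lenz_2|$, which is equivalent by (\ref{intrel}) when $\angmom\neq\bzero$). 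Note also that your proof only invokes the focal identity at epoch $2$, so the sign ambiguity $u_1=\pm\mu/|\erre_1|$ left open by $u_1^2|\erre_1|^2=\mu^2$ is harmless, since (\ref{intrel}) for $j=1$ holds for either root.
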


\begin{proof}
\noindent Using $\angmom_1 =\angmom_2$, the second equation in
(\ref{intersys}) can be written as
\begin{equation}
(\lenz_1 - \lenz_2) = \mu\frac{(|\lenz_1|^2 -
  |\lenz_2|^2)}{2|\angmom|^2}\erre_2,
\label{eqdifflenz}
\end{equation}
where $\angmom$ is the common value of the angular momentum. The
conservation of Laplace-Lenz vector and energy is in general not
guaranteed.  However, we note that if $\angmom_1=\angmom_2$ and
$|\lenz_1| =|\lenz_2|$, then (\ref{intersys}), (\ref{eqdifflenz})
imply $\lenz_1=\lenz_2$ and $\energy_1=\energy_2$.

\noindent If $|\lenz_1| >|\lenz_2|$, the vectors $\lenz_1-\lenz_2$,
$\erre_2$ have the same orientation.
Passing to the norms in (\ref{eqdifflenz}) and substituting
\[
|\erre_2|=\frac{|\angmom|^2}{\mu(1+|\lenz_2|\cos\theta_2)},
\]
where $\theta_2$ is the angle between $\lenz_2$ and $\erre_2$, we obtain
\[
|\lenz_1 - \lenz_2|=\frac{|\lenz_1|^2 -|\lenz_2|^2}{2(1+|\lenz_2|\cos\theta_2)}.
\] 
Using relation
\[
|\lenz_1|^2 -|\lenz_2|^2=|\lenz_1 - \lenz_2|^2+2|\lenz_1 -
\lenz_2||\lenz_2|\cos\theta_2
\]  
and rearranging the terms we get $|\lenz_1 - \lenz_2|=2$.

\noindent If $|\lenz_1| <|\lenz_2|$ the vectors $\lenz_1-\lenz_2$,
$\erre_2$ have opposite orientation, and we obtain $|\lenz_1 -
\lenz_2|=-2$, which is impossible.
In fact in this case we have (see Figure~\ref{fig:2cases})
\[
|\lenz_1|^2 -|\lenz_2|^2=|\lenz_1 - \lenz_2|^2+2|\lenz_1 -
\lenz_2||\lenz_2|\cos(\pi-\theta_2).
\]
\begin{figure}[t]
\centerline{\includegraphics[width=5.5cm]{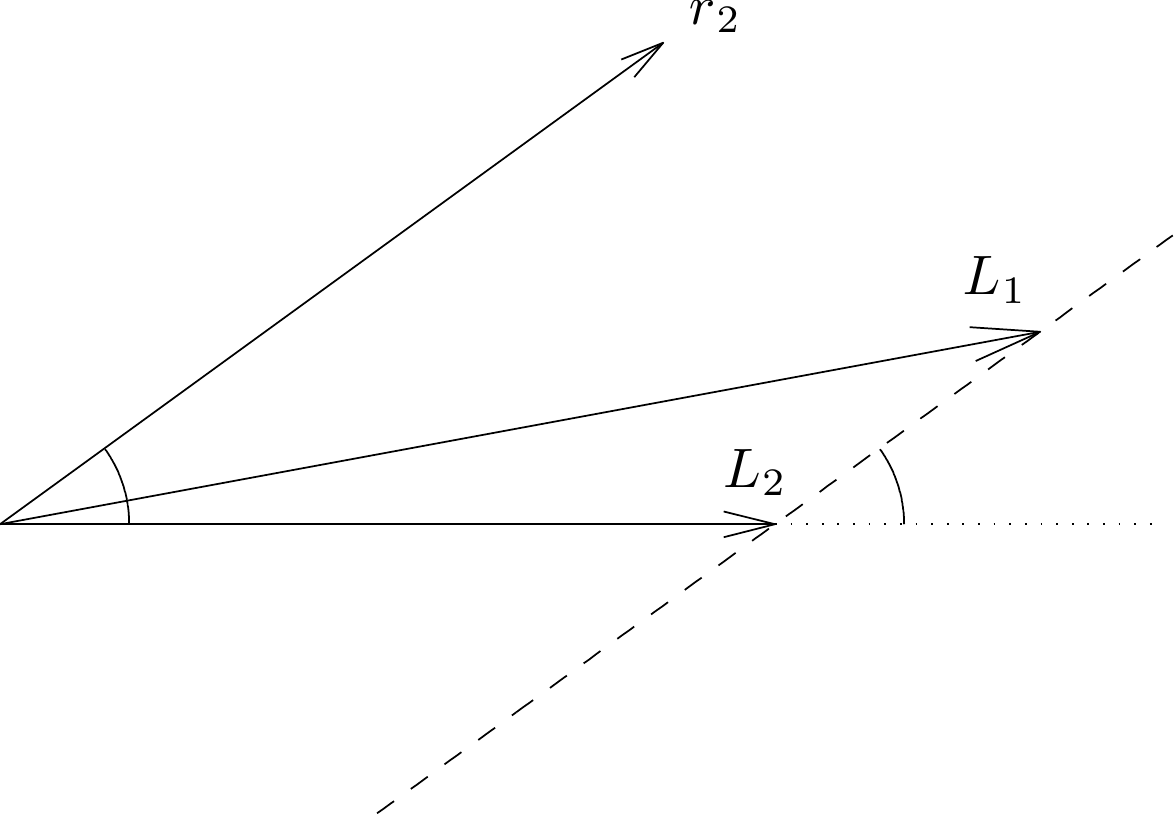}
\hskip 0.8cm\includegraphics[width=5.5cm]{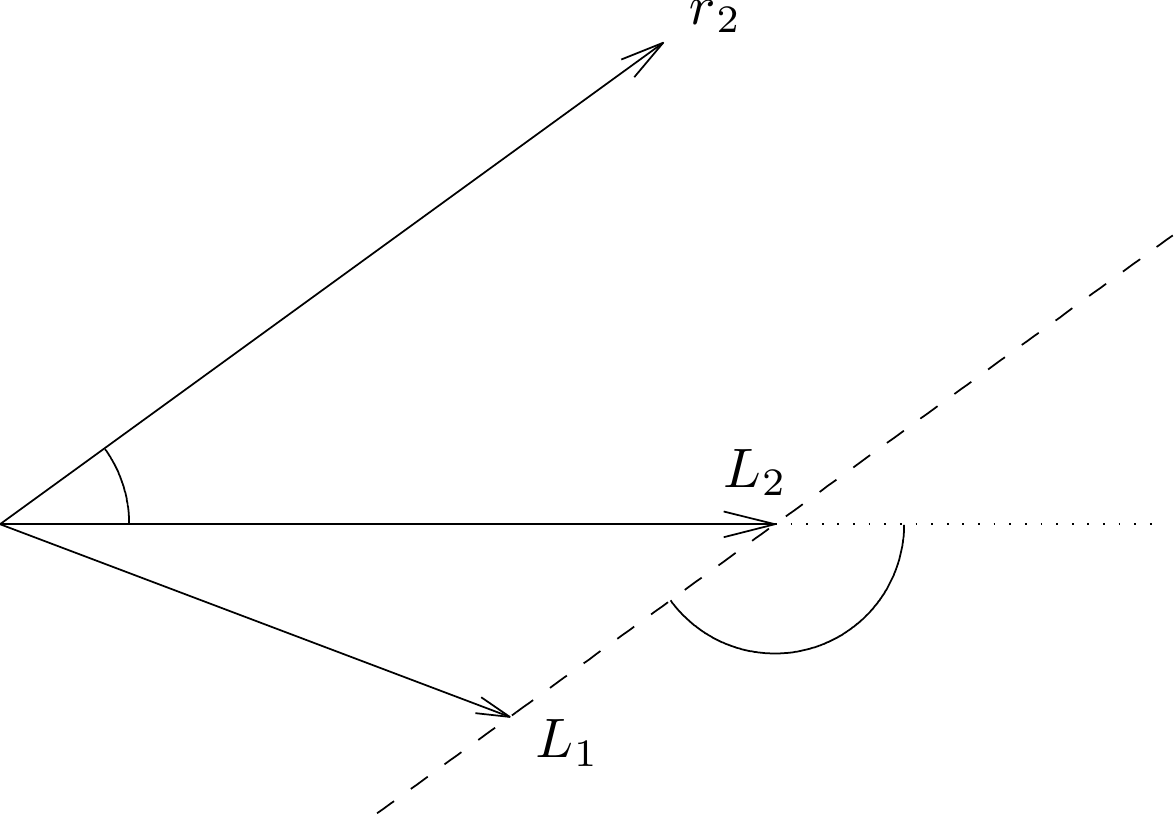}}
\caption{Left: case $|\lenz_1|>|\lenz_2|$.  Right: case $|\lenz_1|<|\lenz_2|$.}
\label{fig:2cases}
\end{figure}
\rightline{$\square$\hskip 0.5cm}

\end{proof}

\noindent Since system (\ref{intersys}) is generically inconsistent, the
spurious solutions discussed in Proposition~\ref{spurious2} usually do not
occur.

\subsection{Covariance of the solutions}
\label{s:covar}


We introduce the vectors
\[
\Avec = (\Att_1,\Att_2),
\qquad 
\Rvec
= (\rho_1, \rhodot_1, \rho_2, \rhodot_2)
\]
and the map
\[
\bPhi(\Rvec, \Avec) = \left(
\begin{array}{c}
\angmom_1 -\angmom_2
\cr
\bDelta_K \times \bDelta_r \cdot\erho_1\cr
\end{array}
\right) .
\]

\noindent Assume we have a pair of attributables $\bar{\Avec} = 
(\bar{\Att}_1,\bar{\Att}_2)$, with
covariance matrices $\Gamma_1, \Gamma_2$, at epochs $\bar{t}_1$, $\bar{t}_2$.
For each vector $\bar{\Rvec}$ such that
\[
\bPhi(\bar{\Rvec}, \bar{\Avec})=\bzero,
\hskip 1cm
\det\frac{\partial \bPhi}{\partial \Rvec}(\bar{\Rvec},\bar{\Avec})\neq 0
\]
there exists a map
$\Avec\mapsto\Rvec(\Avec)$ with $\bPhi(\Rvec(\Avec), \Avec) = \bzero$,
defined in a neighborhood of $\bar{\Avec}$, with $\Rvec(\bar{\Avec})=\bar{\Rvec}$.
Following \cite{gfd11}, we consider the map $\bPsi$ defined by
$\bPhi=\bPsi\circ\mathcal{T}_{att}^{car}$, where $\mathcal{T}_{att}^{car}$ is
the transformation from attributable coordinates  ${\bf E}_{att}$ to Cartesian
coordinates ${\bf E}_{car}$ at the two epochs.
Then we can use the same scheme as in \cite{gfd11} (Section 7) to compute the 
covariance matrix of the Cartesian coordinates at epoch $\bar{t}_1$
through the formula
\[
  \Gamma_{car}^{(1)} = \frac{\partial E_{car}^{(1)}}{\partial\Avec}
\Gamma_{\Avec}\Bigl[\frac{\partial
  E_{car}^{(1)}}{\partial\Avec}\Bigr]^T,
\hskip 1cm
\Gamma_{\Avec}= \left[
\begin{array}{cc}
\Gamma_1 &0\cr
0 &\Gamma_2\cr
\end{array}
\right],
\]
with the derivatives $\frac{\partial E_{car}^{(1)}}{\partial\Avec}$ evaluated
at $\Avec=\bar{\Avec}$.

\noindent The only differences with respect to \cite{gfd11} are in the term
$\frac{\partial\bPsi}{\partial{\bf E}_{car}}$, which in this case is given by
\[
\frac{\partial\bPsi}{\partial\mathbf{E}_{car}} = 
\left[
\begin{array}{cccc}
-\widehat{{\erredot}_1}  &\widehat{{\erre}_1}  &\widehat{{\erredot}_2}  &-\widehat{{\erre}_2}\cr
\stackrel{}{\displaystyle\frac{\partial p_1}{\partial \erre_1}} 
&\displaystyle\frac{\partial p_1}{\partial \erredot_1} 
&\displaystyle\frac{\partial p_1}{\partial \erre_2} 
&\displaystyle\frac{\partial p_1}{\partial \erredot_2}\cr
\end{array}
\right],
\]
where we use 
the {\em hat map}
\[
\R^3\ni (u_1,u_2,u_3) = {\bf u} \mapsto \widehat{\bf u} =
\left[
\begin{array}{ccc}
0 &-u_3 &u_2\cr
u_3 &0 &-u_1\cr
-u_2 &u_1 &0\cr
\end{array}
\right]  .
\]
To compute the derivatives of $p_1=\bDelta_K \times \bDelta_r \cdot\erho_1$ we
take advantage of the following relation, which holds at the solutions of $p_1=0$:
\[
\frac{\partial p_1}{\partial\mathbf{E}_{car}} = \frac{1}{\rho_1}\frac{\partial
  p_1^*}{\partial\mathbf{E}_{car}}.
\]
Here $p_1^* = \bDelta_K\times\bDelta_r\cdot \bv_1$, with
$\bv_1=\erre_1-\qu_1$, and
\begin{eqnarray*}
\frac{\partial p_1^*}{\partial \erre_1}  &=& 
\frac{1}{2}(|\erredot_2|^2-|\erredot_1|^2)\qu_1\times\erre_2 -
(\erredot_1\times\bDelta_r\cdot\bv_1) \erredot_1 \\
&& -(\erredot_1\cdot\erre_1)[-\qu_1\times\erredot_1 + \erre_2\times\erredot_1] +
(\erredot_2\cdot\erre_2)[-\qu_1\times\erredot_2 + \erre_2\times\erredot_2],
\\
\frac{\partial p_1^*}{\partial \erredot_1} &=&
- (\erre_1\times\erre_2\cdot\bv_1) \erredot_1 -
(\erredot_1\times\bDelta_r\cdot\bv_1) \erre_1 -
(\erredot_1\cdot\erre_1)\bDelta_r\times\bv_1,
\\
\frac{\partial p_1^*}{\partial \erre_2} &=& 
-\frac{1}{2}(|\erredot_2|^2-|\erredot_1|^2)\qu_1\times\erre_1 
- (\erredot_1\cdot\erre_1)\erredot_1\times\bv_1 \\
&& +(\erredot_2\times\bDelta_r\cdot\bv_1) \erredot_2
+
(\erredot_2\cdot\erre_2)\erredot_2\times\bv_1,
\\
\frac{\partial p_1^*}{\partial \erredot_2} &=&
(\erre_1\times\erre_2\cdot\bv_1) \erredot_2 +
(\erredot_2\times\bDelta_r\cdot\bv_1) \erre_2 +
(\erredot_2\cdot\erre_2)\bDelta_r\times\bv_1.
\end{eqnarray*}

\section{Including the $J_2$ effect}
\label{s:J2}

If we want to compute preliminary orbits of space debris, we have to take into
account the effect of the oblateness of the Earth ($J_2$ effect), which gives
rise to a perturbation of the two-body dynamics.  Following \cite{ftmr10} and
\cite{gfd11} we can use an iterative scheme to include the $J_2$ effect in the
determination of preliminary orbits with the two-body integrals.
We consider the equations
\[
R_c\angmom_1 = \angmom_2, \quad
 R_L\lenz_1 = \lenz_2,\quad
\energy_1 = \energy_2,\quad
u_1^2|\erre_1|^2 = \mu^2, \quad
u_2^2|\erre_2|^2 = \mu^2,
\]
where the rotation matrices 
\begin{equation}
R_c = R_{\Delta\Omega}^{\hat{z}},
\qquad
R_L = R_{\omega_1 + \Delta\omega}^{\angmom_2} R_{\Delta\Omega}^{\hat{z}}
R_{-\omega_1}^{\angmom_1}
\label{rotmat}
\end{equation}
are defined through the angles
\[
\Delta\omega = \omega_2-\omega_1,
\hskip 0.5cm \Delta\Omega = \Omega_2-\Omega_1 .
\]
In (\ref{rotmat}) we use $R_\theta^\bv$ to denote the rotation by
the angle $\theta$ around the axis defined by the vector $\bv$.

\noindent Following the same steps of Section~\ref{s:elim}
we consider the intermediate equation
\[
\mu R_L\lenz_1 - \energy_1\erre_2 = \mu\lenz_2 - \energy_2\erre_2,
\]
which can be written
\[
\bDelta_K^L + (\frac{1}{2}|\erredot_1|^2 - u_1)\bDelta_r^L = \bzero,
\]
where
\[
\bDelta_K^L = R_L\bK_1 - \bK_2, \hskip 1cm
\bDelta_r^L = R_L\erre_1 - \erre_2.
\]
Then we can eliminate the dependence on $u_1$ by vector product with
$\bDelta_r^L$.
We end up with the system
\begin{equation}
R_c\angmom_1 = \angmom_2,
\qquad
\bDelta_K^L\times \bDelta_r^L = \bzero.
\label{sys2solveJ2}
\end{equation}
Note that system (\ref{sys2solveJ2}) is not polynomial due to the presence of
the rotation matrices $R_c, R_L$, that depend on the orbital elements.
%
%
We can search for solutions of (\ref{sys2solveJ2})
by considering the solutions of
(\ref{sys2solve}), i.e. assuming $R_L=R_c=I$, as first guesses of the
iterative method.  Inserting these solutions into 
$R_L$ and $R_c$, system (\ref{sys2solveJ2}) becomes polynomial, and we can solve
it like in the unperturbed case.
We iterate the procedure and consider the solutions at convergence, if any.
Some care must be taken in selecting the solutions at each iteration.

Similarly to (\ref{DeltaKxDeltar}) we have
\[
\bDelta_K^L\times \bDelta_r^L =
\frac{1}{2}(|\erredot_2|^2- |\erredot_1|^2)R_L\erre_1\times\erre_2
- (\erredot_1\cdot\erre_1)R_L\erredot_1\times\bDelta_r^L +
(\erredot_2\cdot\erre_2)\erredot_2\times\bDelta_r^L.
\]
By developing the expressions of $R_L\erre_1\times\erre_2$,
$R_L\erredot_1\times\bDelta_r^L$, $\erredot_2\times\bDelta_r^L$ as
polynomials in $\rho_1,\rho_2$ we find that the monomials in
$\bDelta_K^L\times\bDelta_r^L$ with the highest degree (i.e. 6) are
all multiplied by $R_L\erho_1\times\erho_2$. Then we can project the
second equation in (\ref{sys2solveJ2}) onto the vectors
$R_L\erho_1$, $\erho_2$,
to obtain two polynomial equations
$p_j(\rho_1,\rho_2)=0$, $j=1,2$.  As before, $p_1$ and $p_2$ have degree $5$.
\noindent The range rates can be eliminated from equation $R_c\angmom_1 =
\angmom_2$, to get a quadratic polynomial $q$ analogous to (\ref{quad_form}).

An important difference with respect to Section~\ref{s:elim} is that in this
case the system $p_1=p_2=q=0$ is generically inconsistent, as can be checked
by a numerical test. Nevertheless we can choose either $p_1=q=0$ or $p_2=q=0$
as polynomial equations for the linkage.

\section{Numerical tests}
\label{s:numexp}

\begin{figure}[!h]
\centerline{
\includegraphics[width=7cm]{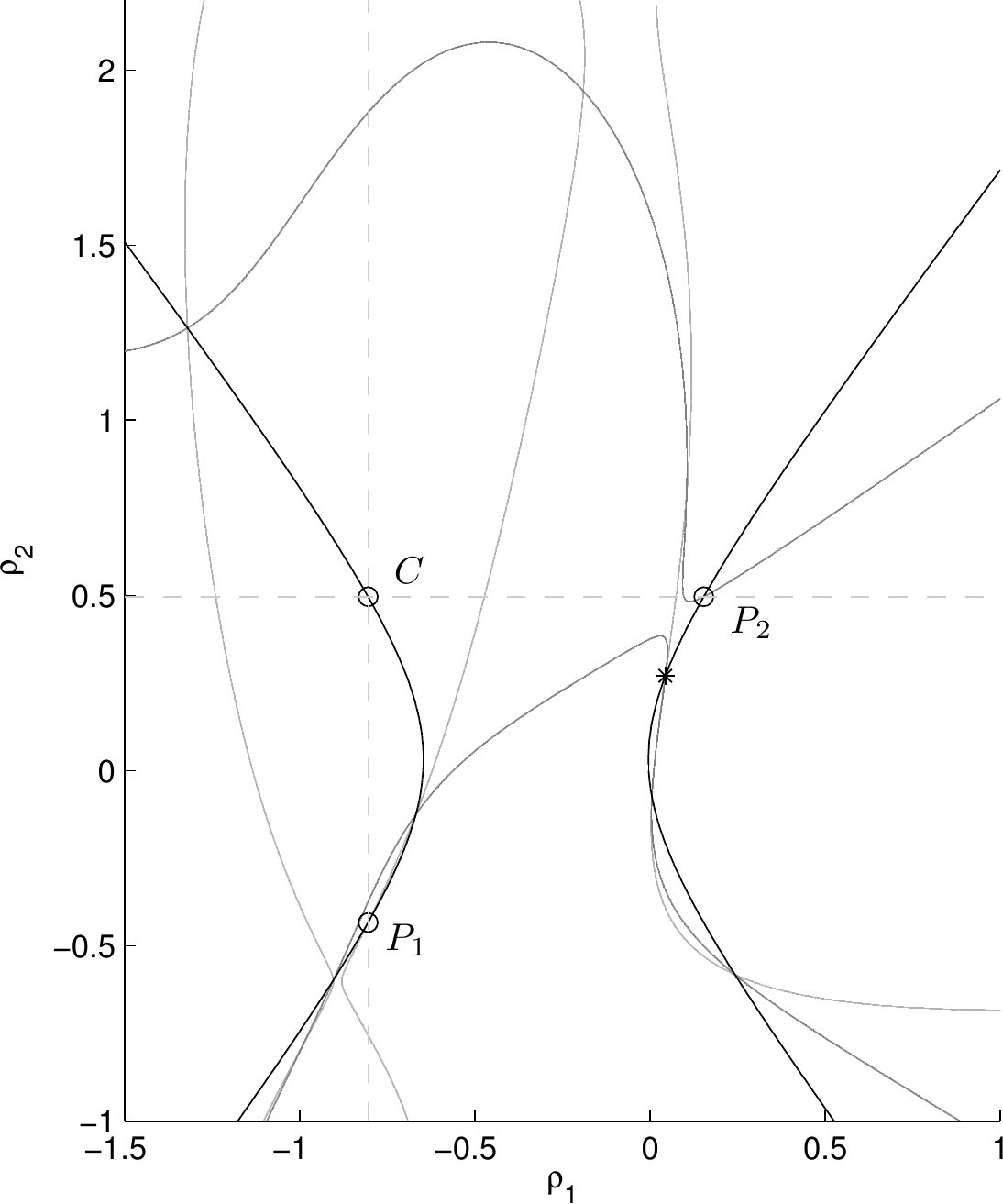}
\includegraphics[width=5.5cm]{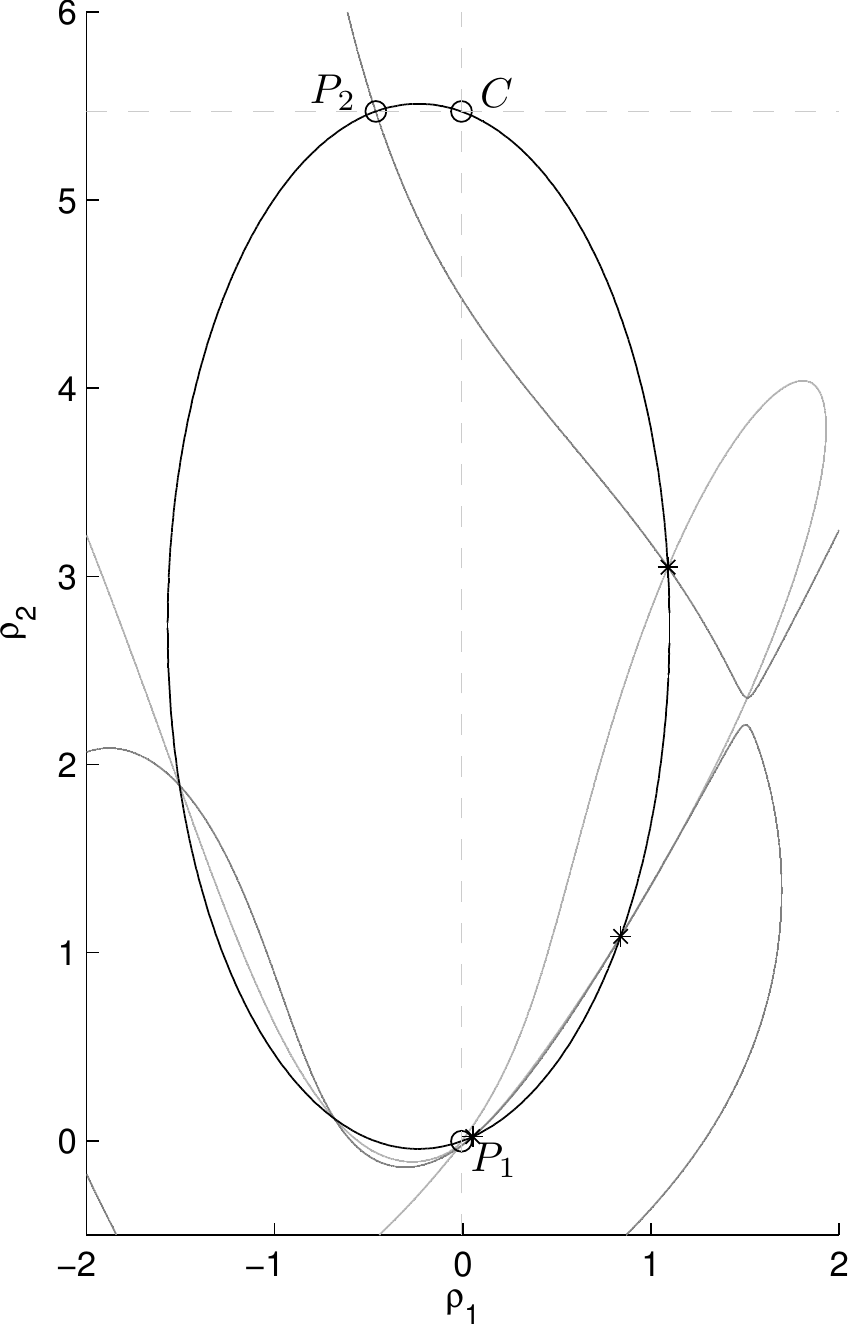}
}
\caption{Left: Intersection of curves defined by $q=p_1=p_2=0$ for
  (101955) Bennu.  Right: the same for (100000) Astronautica.} 
\label{fig:examples}
\end{figure}

We describe the results of two numerical tests, one for a near-Earth
asteroid, the other for a main belt.
The first object is (101955) Bennu.  We link 5 observations 
made on September 11, 1999 together with 11 observations
made on March 30, 2000. After discarding non-real and non-positive
solutions, we are left with only one pair
\[
(\rho_1,\rho_2) = (0.04379, 0.27132).
\]
The second object is the main belt asteroid (100000) Astronautica. We
link 10 observations made on October 17 and 19, 2003 together with 4
observations made on May 9, 2005.  In this case we are left with the
pairs
\[
(\rho_1,\rho_2) = (0.05240, 0.02179), \quad (0.83897, 1.08648), \quad
(1.09052, 3.04874).
\]
The third pair is discarded because it yields an unbounded orbit at epoch
$\bar{t}_2$.  The values of the penalty (see \cite{gfd11}) for the first and second
solution, computed by attribution, are $\chi_4= 1224012.479$ and
$\chi_4= 1.497$ respectively, therefore we select the second solution.
 
\noindent In Figure~\ref{fig:examples} we draw the curves $q=0$ (black),
$p_1=0$ (light gray), $p_2=0$ (dark gray). The dashed straight lines
correspond to $c_{11}=0$ (vertical), $c_{22}=0$ (horizontal).  The
computed pairs $(\rho_1,\rho_2)$ are marked with asterisks.

\noindent In Table~\ref{tab:elems} we show for both asteroids the Keplerian
elements at the two mean epochs $\bar{t}_1, \bar{t}_2$ that we find with this
method. For asteroid (100000), we indicate with the labels 1, 2 the two
bounded orbits that we can compute.
\begin{table}
\begin{tabular}{l|c|c|c|c|c|c|c}
asteroid &$MJD$ &$a$ &$e$ &$I$ &$\Omega$ &$\omega$ &$\ell$ \cr
\hline
$(101955)$   &51432.4 &1.1316    &0.2058    &5.9895    &2.1229   &65.2425  &303.8601  \cr
             &51633.4 &1.1306    &0.2039    &5.9895    &2.1229   &66.4329  &107.6598  \cr
\hline
$(100000)_1$ &52930.2 &1.0365    &0.0277   &1.0298  &202.3124  &275.0221  &268.4023 \cr
             &53499.3 &1.0367    &0.0305   &1.0298  &202.3124  &308.8393   &73.7641  \cr
\hline
$(100000)_2$ &52930.2 &1.8725    &0.0768   &20.8665  &186.7970  &197.8094  &344.8383  \cr
             &53499.3 &1.8754    &0.0862   &20.8665  &186.7970  &197.7172  &198.9997  \cr
\end{tabular}
\caption{Keplerian elements found for (101955) Bennu, and for
  (100000) Astronautica.}
\label{tab:elems}
\end{table}

\section{Acknowledgments}
The work is partially supported by the Marie Curie Initial Training
Network Stardust, FP7-PEOPLE-2012-ITN, Grant Agreement 317185.

\appendix
\section{Appendix}
\label{s:app}

We suggest a way to discard pairs of attributables which are not likely to
fulfill the requirements to be linked. This filter is based on the angular
momentum equation $\angmom_1=\angmom_2$.

\subsection{Filtering pairs of attributables} 
\label{s:filter}

The equation $q=0$ represents a conic $Q$ in the plane $\rho_1\rho_2$.
We can decide to accept only pairs $(\rho_1,\rho_2)$ inside a square
${\cal R}=[\rho_{min},\rho_{max}]\times[\rho_{min},\rho_{max}]$, with
$0<\rho_{min}<\rho_{max}$.  We check whether the conic $Q$ does
intersect ${\cal R}$.

\noindent First consider the case $Q$ is unbounded (hyperbola or
parabola).  In this case it is sufficient to check whether the conic
intersects the boundary $\partial\mathcal{R}$.

\noindent If $Q$ is bounded (ellipse or circle) this check is not
enough: it can happen that $Q\cap \mathcal{R}\neq\emptyset$, with $Q$ lying
totally inside $\mathcal{R}$.

\noindent Here we sketch an algorithm for this filter.
Consider the four straight lines
\[
r_j = \{(\rho_1,\rho_2) : \rho_j = \rho_{min}\}, \quad r_{j+2} =
\{(\rho_1,\rho_2) : \rho_j = \rho_{max}\}, \qquad\mbox{ for } j=1,2.
\]
First compute the intersections of the conic with each line $r_j$, $j=1\ldots
4$, if any.  If an intersection lies on the segment of some $r_j$ belonging
to $\partial{\cal R}$ we accept the pair of attributables and continue the
linkage procedure. If not, we check whether $Q$ is bounded or unbounded.

\noindent If $Q$ is unbounded we discard the pairs of attributables.  If $Q$
is bounded, we can check whether $Q$ is totally inside ${\cal R}$ by computing
the coordinates of the center of the conic.  If the center is inside ${\cal
  R}$ we accept the pair provided $Q$ has no intersections with any $r_j$,
otherwise we reject it.



\end{document}